\documentclass[prx,twocolumn,superscriptaddress]{revtex4-2}

\usepackage{amsmath}
\usepackage{amssymb}
\usepackage{amsfonts}
\usepackage{graphicx}
\usepackage[colorlinks=true, allcolors=blue]{hyperref}
\usepackage{braket}
\usepackage[noend]{algpseudocode}
\usepackage{algorithm}
\usepackage{qcircuit}
\usepackage{subfig}
\usepackage{color}
\usepackage{amsthm}

\theoremstyle{plain}
\newtheorem{thm}{Theorem}[section]

\newcommand{\nm}{\nonumber}
\newcommand{\bs}[1]{\boldsymbol{#1}}
\newcommand{\diff}{\mathrm{d}}
\newcommand{\tr}[1]{\mathrm{tr}\left( #1 \right)}

\newcommand{\lstickx}[1]{\lstick{\makebox[1.5em][l]{$#1$}}}
\newcommand{\arrep}[1]{\ar @<4pt> @/^/[#1]|-{\mbox{ $\times L$ }}}

\begin{document}

\title{Variational quantum algorithm for generalized eigenvalue problems\\
and its application to the finite element method}

\author{Yuki Sato}
\email{yuki-sato@mosk.tytlabs.co.jp}
\affiliation{Toyota Central R\&D Labs., Inc., 41-1, Yokomichi, Nagakute, Aichi 480-1192, Japan}
\affiliation{Quantum Computing Center, Keio University, 3-14-1 Hiyoshi, Kohoku-ku, Yokohama, Kanagawa 223-8522, Japan}

\author{Hiroshi C. Watanabe}
\affiliation{Quantum Computing Center, Keio University, 3-14-1 Hiyoshi, Kohoku-ku, Yokohama, Kanagawa 223-8522, Japan}
\affiliation{Department of Chemistry, Graduate School of Science, Kyushu University, 744 Motooka, Nishi-ku, Fukuoka, 819-0395, Japan}

\author{Rudy Raymond}
\affiliation{IBM Quantum, IBM Japan, 19-21 Nihonbashi Hakozaki-cho, Chuo-ku, Tokyo 103-8510, Japan}
\affiliation{Quantum Computing Center, Keio University, 3-14-1 Hiyoshi, Kohoku-ku, Yokohama, Kanagawa 223-8522, Japan}
\affiliation{Department of Computer Science, The University of Tokyo, 7-3-1, Hongo, Bunkyo-ku, Tokyo 113-0033, Japan}

\author{Ruho Kondo}
\affiliation{Toyota Central R\&D Labs., Inc., 41-1, Yokomichi, Nagakute, Aichi 480-1192, Japan}
\affiliation{Quantum Computing Center, Keio University, 3-14-1 Hiyoshi, Kohoku-ku, Yokohama, Kanagawa 223-8522, Japan}

\author{Kaito Wada}
\affiliation{Department of Applied Physics and Physico-Informatics, Keio University, Hiyoshi 3-14-1, Kohoku-ku, Yokohama 223-8522, Japan}

\author{Katsuhiro Endo}
\affiliation{Research Center for Computational Design of Advanced Functional Materials, National Institute of Advanced Industrial Science and Technology (AIST), 1-1-1 Umezono, Tsukuba, Ibaraki, 305-8568, Japan}
\affiliation{Quantum Computing Center, Keio University, 3-14-1 Hiyoshi, Kohoku-ku, Yokohama, Kanagawa 223-8522, Japan}

\author{Michihiko Sugawara}
\affiliation{Quantum Computing Center, Keio University, 3-14-1 Hiyoshi, Kohoku-ku, Yokohama, Kanagawa 223-8522, Japan}

\author{Naoki Yamamoto}
\affiliation{Quantum Computing Center, Keio University, 3-14-1 Hiyoshi, Kohoku-ku, Yokohama, Kanagawa 223-8522, Japan}
\affiliation{Department of Applied Physics and Physico-Informatics, Keio University, Hiyoshi 3-14-1, Kohoku-ku, Yokohama 223-8522, Japan}

\begin{abstract}
Generalized eigenvalue problems (GEPs) play an important role in the variety of fields including engineering, machine learning and quantum chemistry.
Especially, many problems in these fields can be reduced to finding the minimum or maximum eigenvalue of GEPs.
One of the key problems to handle GEPs is that the memory usage and computational complexity explode as the size of the system of interest grows.
This paper aims at extending sequential quantum optimizers for GEPs.
Sequential quantum optimizers are a family of algorithms that iteratively solve the analytical optimization of single-qubit gates in a coordinate descent manner.
The contribution of this paper is as follows.
First, we formulate the GEP as the minimization/maximization problem of the fractional form of the expectations of two Hermitians. 
We then showed that the fractional objective function can be analytically minimized or maximized with respect to a single-qubit gate by solving a GEP of a 4 × 4 matrix.
Second, we show that a system of linear equations (SLE) characterized by a positive-definite Hermitian can be formulated as a GEP and thus be attacked using the proposed method. 
Finally, we demonstrate two applications to important engineering problems formulated with the finite element method.
Through the demonstration, we have the following bonus finding; a problem having a real-valued solution can be solved more effectively using quantum gates generating a complex-valued state vector, which demonstrates the effectiveness of the proposed method.
\end{abstract}

\maketitle

\section{Introduction}

Generalized eigenvalue problems (GEPs) are expressed as 
\begin{equation}
\label{eq:g_eig}
    A\bs{v} = \lambda B \bs{v},
\end{equation}
where $A$ and $B$ are Hermitian matrices; also $\lambda$ and $\bs{v}$ are the generalized eigenvalue and generalized eigenvector, respectively.
GEPs play an important role in the variety of fields, including engineering~\cite{boffi2010finite}, machine learning~\cite{yu2011kernel} and quantum chemistry~\cite{ford1974generalized}.
In the field of engineering, finding the lowest eigenvalue of a symmetric generalized eigenvalue problem often appears in the finite element approximation of mechanical structures to estimate their dynamical properties~\cite{hughes2012finite}.
Many problems in machine learning can be reduced to finding the minimum or maximum eigenvalue of (generalized) eigenvalue problems, such as for the principal component analysis, canonical correlation analysis, and Fisher discriminant analysis~\cite{yu2011kernel}.
A key problem is that, as the size of the system grows, memory usage and computational complexity explode.
Actually, many works have been performed using supercomputers to deal with the system with several tens of billion degrees of freedom~\cite{Klawonn2015, Toivanen2018}.

Quantum computing is a promising and attractive approach to realize high performance computing that is significantly faster than classical computing thanks to the capability of handling an exponentially large Hilbert space. 
For fault-tolerant quantum computers, the quantum phase estimation algorithm, which can be used to calculate the ground state of a system Hamiltonian in quantum chemistry~\cite{aspuru2005simulated, o2016scalable}, has already been applied to generalized eigenvalue problems~\cite{parker2020quantum}.
Meanwhile, for near-term quantum computers, the variational quantum eigensolver (VQE)~\cite{peruzzo2014NatCom, kandala2017Nat}, 
which calculates the minimal eigenvalue of a Hamiltonian based on a classical-quantum 
hybrid scheme, has been extensively studied, especially for quantum chemistry~\cite{li2019variational, zhang2021shallow}.
VQE is a kind of the variational quantum algorithm (VQA)~\cite{cerezo2021variational,TILLY20221}, where a certain function of expectation values of observables is minimized or maximized through a parameterized quantum circuit (PQC) or simply an \textit{ansatz}.
VQAs have been applied to various problems, including ground state calculations~\cite{peruzzo2014NatCom, kandala2017Nat, gomes2021adaptive, zhang2021shallow}, excited state calculations~\cite{higgott2019variational, nakanishi2019subspace, gocho2023excited, hirai2023excited}, time evolution simulations~\cite{benedetti2021, wada2022} for chemical calculations, partial differential equation solvers~\cite{liu2021variational, sato2021variational, demirdjian2022variational}, algebraic operations such as linear system solvers~\cite{bravo2019variational, xu2021variational} and principle component analysis~\cite{larose2019variational, cerezo2022variational}.
VQAs have also been applied to generalized eigenvalue problems ~\cite{liang2020variational, liang2022quantum}, where the cost function derived from the generalized Rayleigh quotient is optimized based on the gradient-based optimizer.

The sequential quantum optimizers are a family of algorithms which iterate the analytical optimizations of single-qubit gates in a coordinate descent manner.
This concept was first studied in Ref.~\cite{nakanishi2020}, proposing the sequential optimizer of single-qubit gates in the PQC, particularly their rotation angles; we call this method the Nakanishi-Fujii-Todo (NFT) algorithm. 
Rotosolve~\cite{ostaszewski2021}, which was proposed independently from NFT, also optimizes the angle of a single-qubits gate, and Rotoselect~\cite{ostaszewski2021}, which was proposed together with Rotosolve, optimizes the rotational angles of single-qubit gates selecting the optimal axes from a finite discrete set of axes. 
These methods were extended to the continuous optimization of rotational axes, which are called the ``Free-axis selection'' (\textit{Fraxis})~\cite{watanabe2021} and the maximum optimization of a single-qubit gate termed ``Free-quaternion selection''(\textit{FQS})~\cite{wada2022full}.
The methods exhibited better convergences compared with the gradient-based approaches~\cite{nakanishi2020, wada2022full} and a recent finding suggested that their behaviors with regards to the so-called {\it barren plateaus} are
similar to their gradient-based counterparts~\cite{wada2022full}.
These methods are applicable to several problems such as general VQE and simulation for real/imaginary time evolutions, but they are limited to the case where the objective function is in the form of the expectation of a problem Hamiltonian. 
In our case focusing on the GEPs in Eq.~\eqref{eq:g_eig}, unfortunately, $A$ must be identity; that is, the GEPs could not be solved in the same way as the aforementioned methods.

In this paper, we extend the VQA based on the sequential quantum optimizers, in particular focusing on FQS~\cite{wada2022full}, to solve GEPs characterized by two symmetric (more generally Hermitian) matrices. 

The contributions of this paper are as follows.
First, we reformulate the GEPs by the minimization/maximization problem of the generalized Rayleigh quotient, which is further reformulated in the fractional form of the expectations of two Hamiltonians as the objective function of the VQA.
We then show that the sequential quantum optimization method is applicable; that is, the objective function can be analytically minimized or maximized with respect to a single-qubit gate by solving a GEP of a $4 \times 4$ matrix.

Second, we deal with the general problem of solving a system of linear equations (SLE), where an efficient solver is in great demand e.g., for solving partial differential equations (PDE)~\cite{evans2010partial}, and in machine learning~\cite{deisenroth2020mathematics,aggarwal2020linear}.
We show that the SLE problem, which is characterized by a positive-definite Hermitian, can be formulated in a form of GEP and thus be attacked using the proposed variational approach mentioned above. 
While several studies~\cite{liu2021variational, sato2021variational, liu2022application} have already applied VQAs to solving an SLE derived from PDEs, the proposed method is advantageous with respect to less controlled-unitary gates and auxiliary qubits required owing to expressing an SLE as a GEP.

Finally, we demonstrate two applications to engineering problems of importance formulated with the finite element method~\cite{allaire2007numerical, hughes2012finite}. 
One of the problems is for solving an SLE derived from a Poisson equation, and the other is eigen-frequency analysis of a beam structure. 
Based on these results, we give an estimate that a few dozen of qubits are required to solve practical problems. 
Through the demonstration, we have the following bonus finding; a problem having a real-valued solution can be solved more effectively using quantum gates generating a complex-valued state vector.

The rest of this paper is organized as follows.
In Sec.~\ref{sec:preliminary}, we briefly discuss the GEP, which can be solved by minimizing/maximizing the generalized Rayleigh quotient.
We also give an overview of sequential quantum optimizer, which optimizes a PQC in a coordinate descent manner. 
In Sec.~\ref{sec:method}, we construct the method to fully minimize the generalized Rayleigh quotient with respect to a single-qubit gate by extending the FQS.
We also give a VQA for solving an SLE in the GEP formulation. 
Section.~\ref{sec:results} shows the two demonstrations. 
Finally, we conclude this study in Sec.~\ref{sec:conclusions}. 

\section{Preliminaries} \label{sec:preliminary}
\subsection{Generalized eigenvalue problem}
In this paper, we focus on the minimum eigenvalue of the generalized eigenvalue 
problem (GEP) in Eq.~\eqref{eq:g_eig}, assuming that $B \in \mathbb{C}^{N \times N}$ 
is a positive-definite Hermitian matrix and $A \in \mathbb{C}^{N \times N}$ is a Hermitian 
matrix.
To this end, it is convenient to introduce the generalized Rayleigh quotient $R$ defined as
\begin{align} \label{eq:g_rayleigh}
R(\bs{w}; A, B) := \dfrac{\bs{w}^\dagger A \bs{w}}{\bs{w}^\dagger B \bs{w}},
\end{align}
where $\bs{w}$ is an arbitrary unit vector in $\mathbb{C}^N$.
The minimum eigenvalue of the GEP \eqref{eq:g_eig} is identical to the minimum 
value of the generalized Rayleigh quotient $R$. 
Also, the minimizer $\bs{w}$ of $R$ is identical to the eigenvector $\bs{v}$ 
corresponding to the minimum eigenvalue as explained in Appendix~\ref{sec:rayleigh}.
When one is interested in the maximum eigenvalue, it is enough to replace the minimum with the maximum over the following discussion.

\subsection{Overview of sequential quantum optimizers}

As an optimizer of PQC, this study employs a coordinate-descent sequential optimizer which sequentially optimizes single-qubit gates in a PQC. 
The reasons to employ the sequential optimizers, in addition to their analytically computable solutions, are their better convergences~\cite{nakanishi2020,TILLY20221,wada2022full,watanabe2021} and a recent finding that their behaviors with regards to the so-called Barren Plateaus are similar to their gradient-based counterparts~\cite{wada2022full}.
Here, we give an overview of sequential quantum optimizers, especially FQS~\cite{wada2022full}.

Let $\rho$ denote a quantum state prepared through a PQC from an initial state $\rho_0$ as follows:
\begin{align}
\rho = U_D \cdots U_d \cdots U_1 \rho_0 U_1^\dagger \cdots U_d^\dagger \cdots U_D^\dagger,
\end{align}
where $D$ is the number of parameterized single-qubit gates and $U_d~(d=1, \ldots, D)$ 
is the $d$-th parameterized single-qubit gate. 
We herein omit representing fixed unitary gates that include non-local gates.
Now, we represent the $d$-th parameterized single-qubit gate $U_d$ as
\begin{align}
U_d := \bs{q}_d \cdot \Vec{\varsigma},
\end{align}
where
\begin{align}
    \bs{q}_d = \begin{pmatrix} 
    \cos \left( \dfrac{\theta_d}{2} \right) \\
    \sin \left( \dfrac{\theta_d}{2} \right) n_{dx} \\
    \sin \left( \dfrac{\theta_d}{2} \right) n_{dy} \\
    \sin \left( \dfrac{\theta_d}{2} \right) n_{dz} \end{pmatrix}, 
\label{single-para-vector}
\end{align}
and $\vec{\varsigma}=(I, -\iota X, -\iota Y, -\iota Z)$ with $\iota$ the imaginary unit and $X$, $Y$, and $Z$ the Pauli matrix.
$\theta_d$ and $\bs{n}_d = (n_{dx}, n_{dy}, n_{dz})$ are the rotational angle and axis of the $d$-th gate, respectively.
That is, $U_d$ is parameterized by the unit quaternion $\bs{q}_d$. 
Accordingly, the quantum state is parametrized with the set of 
$\{\bs{q}_d\}_{d=1}^D$ as $\rho=\rho(\{\bs{q}_d\}_{d=1}^D)$. 
The strategy of sequential optimization is to repeat the exact optimization of 
$\bs{q}_d$ for all $d=1, \ldots, D$. 
Actually, the expectation $\braket{H}$ can be expressed as the quadratic form of the 
unit quaternion $\bs{q}_d$ as follows:
\begin{align} \label{eq:quadratic_form}
 \braket{H}(\bs{q}_d) = \bs{q}_d^\top S(\rho', H') \bs{q}_d,
 % \tr{H\rho}(\bs{q}_d; \{ \bm{q}_i \}_{i \neq d}) = \bs{q}_d^\top S_H (\rho) \bs{q}_d,
\end{align}
where 
\begin{align}
    \rho' &= U_{d-1} \cdots U_1 \rho_0 U_1^\dagger \cdots U_{d-1}^\dagger, \\
    H' &= U_{d+1}^\dagger \cdots U_D^\dagger H U_D \cdots U_{d+1},
\end{align}
and $S(\rho', H')$ is a $4 \times 4$ real symmetric matrix whose $(i, j)$-component is defined as
\begin{equation} \label{eq:S_components}
    (S)_{i j}:=\frac{1}{2}{\rm tr}\left[\rho'\left(\varsigma^{\dagger}_{i}H'\varsigma_{j}+\varsigma^{\dagger}_{j}H'\varsigma_{i}\right)\right],
\end{equation}
and can be constructed from expectation values of $H$ calculated using 10 parameter sets, which we call \textit{parameter configuration}~\cite{endo2023optimal}.
The minimizer of $\braket{H}(\bs{q}_d)$ is the eigenvector corresponding to the minimum eigenvalue of the following eigenvalue problem of the $4 \times 4$ matrix $S(\rho', H')$:
\begin{equation}
S(\rho', H')\bs{q}_d = \lambda \bs{q}_d,
\end{equation}
where $\lambda$ is an eigenvalue.
Thus, FQS gives the exact minimizer of the objective function $\braket{H}$ with respect to $\bs{q}_d$, by solving the above eigenvalue problem.

Because the FQS formulation gives a unified form of sequential quantum optimizer 
of PQCs~\cite{wada2022full}, other sequential quantum optimizers can also be reduced to 
an eigenvalue problem.
In NFT~\cite{nakanishi2020} (also in Rotosolve~\cite{ostaszewski2021}), the angle around 
a fixed axis $\bs{n}$ of a single-qubit gate serves as the parameter. 
That is, the parameterized single-qubit gate $U_d$ is represented as
\begin{align}
    U_d = \bs{q}_d \cdot \vec{\varsigma} = \cos \left(\dfrac{\theta_d}{2} \right) I - \iota \sin \left(\dfrac{\theta_d}{2} \right) \bs{n} \cdot \vec{\sigma},
\end{align}
where $\vec{\sigma} = (X, Y, Z)$, and the quaternion $\bs{q}_d$ is now restricted to $(\cos(\theta_d /2), \sin(\theta_d /2)\bs{n}^\top)^\top$.
Then, the expectation $\braket{H}$ can be written as the quadratic form of the vector $\bs{c}_d:=(\cos(\theta_d /2), \sin(\theta_d /2))^\top$, as follows~\cite{wada2022full}:
\begin{align} \label{eq:NFT}
    \braket{H}(\bs{c}_d) = \bs{c}_d^\top \begin{bmatrix}
    S_{00} & \vec{S}_0 \cdot \bs{n} \\
    \vec{S}_0 \cdot \bs{n} & \bs{n}^\top \tilde{S} \bs{n}
    \end{bmatrix}
    \bs{c}_d,
\end{align}
where $\vec{S}:=(S_{01}, S_{02}, S_{03})$ and $\tilde{S}$ is the $3 \times 3$ matrix consisting 
of the lower right components of $S$. 
NFT minimizes the objective function $\braket{H}$ with respect to the angle of 
a single-qubit gate, through the eigenvalue problem of the matrix in Eq.~\eqref{eq:NFT}.
In Rotoselect~\cite{ostaszewski2021}, a finite discrete set of axes is prepared and the angle 
of the single-qubit gate is tried to be updated with respect to each axis in the set by using 
NFT.
Then, the angle and axis that give the minimum objective function is selected.
Thus, Rotoselect adjusts the axis of the single-qubit gates in a discrete way. 
In Fraxis~\cite{watanabe2021}, the axis of a single-qubit gate is to be optimized under the 
condition that its angle is fixed to a constant (typically, $\pi$). 
That is, the parameterized single-qubit gate $U_d$ is represented as 
\begin{align}
    U_d = \bs{q}_d \cdot \vec{\varsigma} = -\iota \bs{n}_d \cdot \vec{\sigma}.
\end{align}
The quaterinion $\bs{q}_d$ is now restricted to $(0, \bs{n}_d)^\top$.
Then, the expectation $\braket{H}$ can be written as the quadratic form of the vector $\bs{n}_d$, as follows~\cite{watanabe2021}:
\begin{align}
    \braket{H}(\bs{n}_d) = \bs{n}_d \tilde{S} \bs{n}_d.
\end{align}
Fraxis minimizes the objective function of the form $\braket{H}$ with respect 
to the axis of a single-qubit gate through the eigenvalue problem of the $3 \times 3$ 
matrix $\tilde{S}$.

To solve the eigenvalue problem, the FQS formulation requires solving a quartic equation to obtain the (local) optimal gate, 
while the Fraxis and NFT require, respectively, solving a cubic and a quadratic equation.
All of these equations for FQS, Fraxis and NFT can be solved analytically~\cite{cardano1560}.

\section{Method} \label{sec:method}

\subsection{Extension of FQS to fractional objective function}
\label{sec:method A}

The problem of finding the minimum eigenvalue of Eq.~\eqref{eq:g_eig} is reduced 
to minimization of the generalized Rayleigh quotient \eqref{eq:g_rayleigh}. 
Here we take the approach using a quantum computer to solve this problem; 
then $\bs{w} \in \mathbb{C}^N$ in Eq.~\eqref{eq:g_rayleigh} is replaced with a quantum 
state vector $\Ket{\psi}$ of $n$-qubit system, and the generalized Rayleigh quotient 
is expressed as 
\begin{align}
    \mathcal{F}(\rho) := \dfrac{\tr{A\rho}}{\tr{B\rho}}, 
\label{eq:objective}
\end{align}
where $\rho=\Ket{\psi (\{\bs{q}_d\}_{i=d}^D)} \Bra{\psi (\{\bs{q}_d\}_{d=1}^D)}$. 
Note that the required number of qubits is $\mathcal{O}(\log_2 N)$, which is thus 
the advantage of using quantum computation. 
If $\log_2 N$ is not an integer, the matrices $A$ and $B$ can be modified so that their dimensions become $2^n \times 2^n$ where $n=\lceil \log_2 N \rceil$ as discussed in 
Appendix~\ref{sec:padding}.

In the formulation of sequential quantum optimization to repeatedly optimize 
$\bs{q}_d$, the expectations can be expressed as Eq.~\eqref{eq:quadratic_form}, 
and thus Eq.~\eqref{eq:objective} is rewritten as 
\begin{align}
     \mathcal{F}(\bs{q}_d) 
        = \dfrac{\bs{q}_d^\top S(\rho', A') \bs{q}_d}{\bs{q}_d^\top S(\rho', B') \bs{q}_d}, 
\label{eq:obj_q}
\end{align}
where $\bs{q}_d$ is the single parameter vector \eqref{single-para-vector}. 
Also $S(\rho', A')$ and $S(\rho', B')$ are the matrices whose components are 
calculated by Eq.~\eqref{eq:S_components} for $A$ and $B$, respectively. 
Because Eq.~\eqref{eq:obj_q} takes the same form as the Rayleigh quotient 
\eqref{eq:g_rayleigh}, $\mathrm{argmin}_{\bs{q}_d} \mathcal{F}(\bs{q}_d)$ 
is identical to the eigenvector corresponding to the minimum eigenvalue of the following 
4-dimensional GEP:
\begin{align}
      S(\rho', A') \bs{p}_i = \lambda_i S(\rho', B') \bs{p}_i, 
\label{eq:eigenprob}
\end{align}
where $\lambda_i$ is the $i$-th eigenvalue and $\bs{p}_i$ is the $i$-th unit eigenvector.
Assuming that the eigenvalues are indexed in the accending order,
$\min_{\bs{q}_d} \mathcal{F}(\bs{q}_d) = \lambda_1$ and 
$\mathrm{argmin}_{\bs{q}_d} \mathcal{F}(\bs{q}_d) 
= \bs{p}_1$.

The entire procedure of the proposed method is shown in Algorithm~\ref{algo:method}. 
The parameters of the single-qubit gates in a PQC are sequentially updated by 
solving the GEP \eqref{eq:eigenprob} until the value of the objective function 
becomes less than a given tolerance value $\varepsilon_\mathrm{tol}$. 
In this study, the order of optimizing the single-qubit gates in line 4 in Algorithm~1 is 
simply chosen as the ascending order, i.e., from top left to bottom right in the circuit diagram of PQC.
We call the procedure from line 3 to 9 in Algorithm~1 an \textit{iteration}, which updates all single-qubit gates once.

Since the matrices $S(\rho', A')$ and $S(\rho', B')$ are constructed by expectation values of $A$ and $B$, they will include the sampling errors, i.e. shot noises.
Due to sampling errors, the matrix $S(\rho', B')$ can be no longer positive definite when the number of sampling is relatively small.
This will cause the numerical instability in solving Eq.~\eqref{eq:eigenprob}.
Thus, if the minimum eigenvalues of $S(\rho', B')$, $\beta_\mathrm{min}$, is negative, we add $(\epsilon - \beta_\mathrm{min}) I$ to $S(\rho', B')$ where $\epsilon$ is a small positive constant. 
This ensures that Eq.~\eqref{eq:eigenprob} is well-posed although the update direction of parameters is slightly changed.

\begin{figure}[t]
\begin{algorithm}[H]
\caption{Sequential optimizer for fractional objective function} \label{algo:method}
\begin{algorithmic}[1]
\Require PQC structure, Hermitians $A$ and $B$, tolerance $\varepsilon_\mathrm{tol}$.
\Ensure Optimized parameters $\{\bs{q}_d^\ast \}_{d=1}^D$.
\State Set initial parameters $\{\bs{q}_d \}_{d=1}^D$ randomly.
\State Set $\mathcal{F}_\mathrm{curr} \leftarrow 0$, $\mathcal{F}_\mathrm{prev} \leftarrow 0$, $\varepsilon \leftarrow 1$
\While {$\varepsilon > \varepsilon_\mathrm{tol}$}
    \For {$d$ in $[1, D]$}
        \State Construct the matrix $S(\rho', A')$ and $S(\rho', B')$.
        \State Update $\bs{q}_d$ by solving Eq.~\eqref{eq:eigenprob}.
        \State Set $\mathcal{F}_\mathrm{prev} \leftarrow \mathcal{F}_\mathrm{curr}$.
        \State Set $\mathcal{F}_\mathrm{curr} \leftarrow \lambda$

        \Comment{$\lambda$ is the minimum or maximum eigenvalue.}

        \State Set $\varepsilon \leftarrow \| \mathcal{F}_\mathrm{curr} - \mathcal{F}_\mathrm{prev} \| / \| \mathcal{F}_\mathrm{prev} \|$
    \EndFor
\EndWhile
\State \Return{$\{\bs{q}_d \}_{d=1}^D$}
\end{algorithmic}
\end{algorithm}
\end{figure}

Let us assume that, through this sequential optimization, we find the set of 
parameters $\{\bs{q}^*_d\}_{d=1}^D$ that exactly minimizes $\mathcal{F}(\rho)$; 
this gives us the solution of GEP in the form of quantum state as 
$\Ket{\bs{v}}=\Ket{\psi (\{\bs{q}^*_d\}_{i=d}^D)}$. 
Note that $\mathcal{O}(N)$ measurements are required to retrieve all the components 
from the quantum state $\Ket{\bs{v}}$. 
Hence, as discussed in \cite{harrow2009quantum}, the proposed method should be 
used in the case where only some characteristic quantities about the solution are 
of interest; typically, such quantity is represented by $\Bra{\bs{v}}M\Ket{\bs{v}}$ 
with $M$ a Hermitian matrix, which can thus be efficiently computed on a 
quantum computer. 
Actually, in Sec.~\ref{sec:results}, we provide two examples where this 
assumption makes sense from an engineering point of view.

Lastly note that, because the FQS formulation gives a unified form of sequential optimizer of 
PQCs~\cite{wada2022full}, other sequential approaches, such as NFT~\cite{nakanishi2020}, 
Rotosolve/Rotoselect~\cite{ostaszewski2021}, and Fraxis~\cite{watanabe2021}, can also be applied 
to solve the GEP problem in the similar approach. 
We indeed use them to compare with FQS in the experiments.  

\subsection{Asymptotic behavior of a parameter update under sampling errors} \label{sec:asymp}
We solve Eq.~\eqref{eq:eigenprob} to update parameters of a single-qubit gate.
Since the matrix $S(\rho', A')$ and $S(\rho', B')$ include sampling errors under a finite number of shots, the resulting eigenvalues and eigenvectors will also include fluctuation.
Here, we summarize the asymptotic behavior of eigenvalues of Eq.~\eqref{eq:eigenprob}.
We provide the detailed analysis in Appendix~\ref{sec:sampling_error}.
In the following, we consider the minimization of the objective function, i.e. the minimum eigenvalue of Eq.~\eqref{eq:eigenprob}.

Let $n_\mathrm{s}$ be the number of shots per individual quantum circuit.
Since $S(\rho', A')$ and $S(\rho', B')$ are respectively constructed by the linear combination of expectation values of $A$ and $B$ calculated by several parameter sets, their perturbations can be represented as
\begin{align}
    S(\rho', A') &= S(\rho', A')^{(0)} + \epsilon S(\rho', A')^{(1)}, \\
    S(\rho', B') &= S(\rho', B')^{(0)} + \epsilon S(\rho', B')^{(1)}, 
\end{align}
where $\epsilon$ is $\mathcal{O}(1 / \sqrt{n_\mathrm{s}})$, the superscript $\cdot^{(0)}$ represents a quantity without any perturbation and $\cdot^{(1)}$ represents one with perturbation.
By considering the second-order asymptotic expansion of eigenvalues $\lambda_i$ and eigenvectors $\bs{p}_i$, we obtain
\begin{align}
    \mathbb{E}\left[ \lambda_i \right] &= \lambda_i^{(0)} + \epsilon^2 \lambda_i^{(0)} \mathbb{E} \left[ \left( \bs{p}_i^{{(0)}\top}  S(\rho', B')^{(1)} \bs{p}_i^{(0)} \right)^2 \right] \nonumber \\
    & - \epsilon^2 \mathbb{E}\left[ \bs{p}_i^{{(1)}\top} \left( S(\rho', A')^{(0)} - \lambda_i^{(0)}S(\rho', B')^{(0)} \right) \bs{p}_i^{(1)} \right].
\end{align}
Therefore, the estimation of the objective function value $\mathcal{F}$ by the minimum eigenvalue $\lambda_i$ has the bias that vanishes asymptotically no slower than or equal to $\epsilon^2$, i.e. $\mathcal{O}(1/n_\mathrm{s})$.
Similarly, we can estimate the objective function value after update of parameters using the perturbed eigenvector, as follows:
\begin{align} \label{eq:perturbed_cost}
    &\mathbb{E}\left[ \mathcal{F}(\bs{p}_1) \right] \nonumber \\
    &\approx \lambda_1^{(0)} + \epsilon^2 \mathbb{E} \left[ \bs{p}_1^{(1)\top} \left( S(\rho', A')^{(0)} - \lambda_1^{(0)} S(\rho', B')^{(0)}\right) \bs{p}_1^{(1)} \right] \nonumber \\
    & \leq \lambda_1^{(0)} + \epsilon^2 \left( \lambda_\mathrm{max}^{(0)} - \lambda_1^{(0)} \right) \mathbb{E} \left[ \bs{p}_1^{(1)\top} S(\rho', B')^{(0)} \bs{p}_1^{(1)} \right],
\end{align}
where $\lambda_\mathrm{max}$ is the maximum eigenvalue.
The second term is related to  the magnitude of the imperfect parameter update due to sampling errors and vanishes asymptotically no slower than or equal to $\epsilon^2$.
This equation indicates that the magnitude of the imperfection depends on the difference between the maximum and minimum eigenvalues of Eq.~\eqref{eq:eigenprob}, i.e. the maximum and minimum objective function values reachable by changing parameters of the single-qubit gate to be updated.
Therefore, when the difference is large, the objective function after parameter update can become large, and vise versa.

\subsection{Generalized eigenvalue problem for a system of linear equations}
\label{sec:method B}

Here we consider a system of linear equations (SLE): 
\begin{align}
K \bs{u} = \bs{f}, 
\label{eq:ls}
\end{align}
where $K$ $\in \mathbb{C}^{N \times N}$ is a given positive-definite matrix, 
$\bs{u} \in \mathbb{C}^N$ is an unknown vector, and $\bs{f} \in \mathbb{C}^N$ is 
a given vector. 
Without loss of generality, we assume that $\| \bs{f} \| = 1$.
Such SLE arises in a variety of applications including partial differential equation~\cite{evans2010partial} and machine learning~\cite{deisenroth2020mathematics,aggarwal2020linear}.

The problem of solving the SLE can be formulated as a GEP as follows:
\begin{align} 
\label{eq:ls_eigenprob}
     \bs{f} \bs{f}^\dagger \bs{v} = \lambda K \bs{v}, 
\end{align}
where $\bs{v} \in \mathbb{C}^N$ is an eigenvector and $\lambda \in \mathbb{R}$ is 
the corresponding eigenvalue. 
Here, $\bs{f}\bs{f}^\dagger$ corresponds to $A$ in Eq.~\eqref{eq:objective}.
Since $\bs{f} \bs{f}^\dagger$ is a rank-1 matrix, this GEP has only one 
non-zero and non-degenerate eigenvalue; the other $N-1$ eigenvalues are all zeros.
Using the non-zero eigenvalue $\hat{\lambda}$ and the corresponding eigenvector 
$\hat{\bs{v}}$, the GEP reads
\begin{align}\label{eq:b_equal}
K \left( \dfrac{\hat{\lambda} \hat{\bs{v}}}{\bs{f}^\dagger \hat{\bs{v}}} \right) 
= \bs{f}.
\end{align}
Since $\hat{\lambda}$ is the non-zero eigenvalue, it is ensured that $\bs{f}^\dagger \hat{\bs{v}} \neq 0$ from Eq.~\eqref{eq:ls_eigenprob}.
Substituting Eq.~\eqref{eq:b_equal} into Eq.~\eqref{eq:ls}, we obtain
\begin{align}
 K\left(\bs{u}-\dfrac{\hat{\lambda} \hat{\bs{v}}}{\bs{f}^\dagger \hat{\bs{v}}}\right) = 0.
\end{align}
Since $K$ is positive-definite, i.e., invertible, we obtain
\begin{align}
\bs{u} = \dfrac{\hat{\lambda} \hat{\bs{v}}}{ \bs{f}^\dagger \hat{\bs{v}} },
\end{align}
meaning that the solution of the SLE is given by the non-zero (maximal) eigenvalue and its 
corresponding eigenvector of the GEP \eqref{eq:ls_eigenprob}.

Therefore, we can employ a quantum computer to solve the SLE \eqref{eq:ls}, by 
formulating it as the GEP \eqref{eq:ls_eigenprob} and using the method described 
in Sec.~\ref{sec:method A}. 
That is, we represent $\bs{f}$ by a quantum state vector 
$\Ket{\bs{f}}$ of $\lceil \log_2 N \rceil$-qubit system, which leads to a GEP 
\eqref{eq:g_eig} with $A=\Ket{\bs{f}} \Bra{\bs{f}}$ and $B=K$. 
Note that the expectation $\tr{A\rho}=\tr{\Ket{\bs{f}} \Bra{\bs{f}} \rho}$ can 
be evaluated as the fidelity of $\rho$ and $\Ket{\bs{f}}$. 
Let us assume that the algorithm described in Sec.~\ref{sec:method A} yields 
the optimal $\{{\bs{q}_d} \}$ and accordingly the optimal 
$\Ket{\psi (\{\bs{q}_d\})}=\Ket{\hat{\bs{v}}}$, which is the quantum-state 
representation of the optimal $\hat{\bs{v}}$. 
This gives us the optimal $\Ket{\bs{u}}=\bs{u}$ as well, if we are just 
interested in the solution up to the constant. 
Otherwise, to have the exact solution $\bs{u}$, we additionally need to calculate 
the value of $\bs{f}^\dagger \hat{\bs{v}}=\braket{\bs{f}|\psi(\{\bs{q}_d\})}$.

Note that, in contrast to the variational quantum algorithms for solving an SLE 
\cite{bravo2019variational, liu2021variational}, the proposed method does not 
require any auxiliary qubit {\it during} the optimizing process of the PQC. 
A brief explanation is as follows. 
In Ref.~\cite{bravo2019variational}, the auxiliary qubit is required to perform 
the Hadamard test and Hadamard-overlap test to update the parameters; also 
Ref.~\cite{liu2021variational} needs to prepare the entangled state 
$(\Ket{0}\Ket{\bs{f}} + \Ket{1}\Ket{\psi}) / \sqrt{2}$, with $\Ket{\psi}$ the 
state generated by a PQC, in order to evaluate the inner product 
$\braket{\bs{f}|\psi}$ and accordingly the cost for updating the parameters. 
On the other hand, as described above, the proposed method generates 
$\hat{\bs{v}}=\Ket{\hat{\bs{v}}}$ without any auxiliary qubit. 
If one needs $\bs{u}=\Ket{\bs{u}}$, the proposed method also requires an 
auxiliary qubit to calculate the inner product 
$\bs{f}^\dagger \hat{\bs{v}}=\braket{\bs{f}|\psi}$ on the quantum device, but 
this operation is necessary only once after the entire optimization process.

\subsection{Complexity and Resource}

Let us assume that $A$ and $B$ are band-matrices with the bandwidth of $k_A$ and $k_B$, 
respectively, which typically appear in the problem of finite element method (FEM). 
The proposed method calculates the expectation values of $A$ and $B$, 
which requires $\mathcal{O}(n k)$ kinds of quantum circuits using the 
extended Bell measurement (XBM) technique~\cite{kondo2022computationally}, where 
$n = \lceil \log_2 N \rceil$ and $k:=\mathrm{max}(k_A, k_B)$.
An overview of XBM is given in Appendix~\ref{sec:xbm}.
Suppose the number of shots per quantum circuit is $s$.
Then, the total number of shots required to calculate expectation values of $A$ and $B$ is $\mathcal{O}(n k s)$.

Also, as mentioned below Eq.~\eqref{eq:objective}, the proposed method has a quantum 
advantage that it uses only $\lceil \log_2 N \rceil$-qubits to represent a vector 
in $\mathbb{C}^N$. 
Thus, even for practical problems using the FEM with tens or hundreds of thousands 
of degrees of freedom~\cite{ribeiro2013finite, muhammad2020finite, belhocine2020thermomechanical}, it requires only less than 20 qubits. 

To encode the right hand side vector $\bs{f}$ into a quantum state $\Ket{\bs{f}}$, we have to design the so-called \textit{oracle} $U_f$ that prepares $\Ket{\bs{f}} = U_f \Ket{0}^{\otimes n}$.
When $\bs{f}$ corresponds to a relatively simple input representing such as a point source or uniform input, the oracle can be intuitively designed using Pauli-$X$ and Hadamard gates, as we describe in Sec.~\ref{sec:poisson_result}.
In general cases, on the other hand, some amplitude encoding techniques~\cite{Zhang2021low, nakaji2022approximate} are required.

\section{Numerical Experiments} 
\label{sec:results}

In the following, we provide numerical experiments.
Unless otherwise stated, we used the statevector simulator of Qiskit~\cite{Qiskit}.

\subsection{Solving the Poisson equation}

\subsubsection{Problem statement}

We apply the proposed method to the problem of solving a partial differential equation 
(PDE). 
Among PDEs, we here focus on the Poisson equation, which appears in versatile applications 
including steady-state heat transfer, electrostatics~\cite{griffiths1999introduction}, 
and computational fluid dynamics~\cite{chung2010computational, blazek2015computational}. 
Before proceeding, recall that the proposed method obtains the solution vector as 
a quantum state $\Ket{\psi \{\bs{q}_d \}}$, meaning that practically we can retrieve 
only a few characteristic quantities from it. 
For the case of PDE problem, such partial information is for instance the surface 
temperature of a material, which indeed can be calculated from  a few component of 
the entire solution vector of the Poisson equation.

Let $\Omega \subset \mathbb{R}^m$ denote an open bounded set where $m$ is the number 
of spatial dimensions. 
The Poisson equation governs the state field $u(\bs{x}) \in \mathbb{C}$ at the 
spatial coordinate $\bs{x} \in \Omega$, which behaves as 
\begin{align} \label{eq:poisson}
	- \nabla^2 u(\bs{x}) = f(\bs{x}) \quad \text{for } \bs{x} \in \Omega,
\end{align}
where $\nabla$ is the gradient operator with respect to $\bs{x}$ and $f(\bs{x}): \Omega \rightarrow \mathbb{C}$ is a given function.
We impose the Dirichlet boundary condition on $\partial \Omega$ as
\begin{align}
    u(\bs{x}) = 0 \quad \text{on } \bs{x} \in \partial \Omega.
\end{align}
Discretizing the Poisson equation by FEM~\cite{allaire2007numerical, hughes2012finite} 
yields an SLE written as follows:
\begin{align}
    K \bs{U} = \bs{F}, 
    \label{eq:ls_poisson}
\end{align}
where $K \in \mathbb{R}^{N \times N}$ is a positive-definite matrix called the 
stiffness matrix; 
also, $\bs{U} \in \mathbb{C}^N$ and $\bs{F} \in \mathbb{C}^N$ are the 
discretized vectors of $u(\bs{x})$ and $f(\bs{x})$, respectively. 
$N$ is the number of nodes of the finite element mesh. 
The discretization procedure by FEM is detailed in Appendix~\ref{sec:fem_poisson}.

The SLE \eqref{eq:ls_poisson} has the form of Eq.~\eqref{eq:ls}, and thus, it can be 
formulated as a GEP and solved using a quantum computer. 
The procedure is summarized in Algorithm~\ref{algo:method}; in our case, 
$A=\bs{F}\bs{F}^\dagger$ and $B=K$. 
In particular, because $K$ is a band-matrix, the expectation value 
$\tr{B \rho}=\tr{K \rho}$ can be efficiently calculated by 
XBM~\cite{kondo2022computationally}. 
Also, due to the linearity of Eq.~\eqref{eq:ls_poisson}, 
we can set $\|\bs{F} \|=1$ to well define the quantum state $\Ket{\bs{F}}$. 
Then, assuming that $\Ket{\bs{F}}$ is efficiently prepared by a unitary $U_F$, i.e. 
$\Ket{\bs{F}} = U_F \Ket{0}^{\otimes n}$, we can use the 
inversion test~\cite{ruan2021quantum} to calculate 
$\mathrm{tr}(A\rho)=\mathrm{tr}(\Ket{\bs{F}}\Bra{\bs{F}} \rho)$.

Here, we focus on the one-dimensional Poisson equation discretized by the first-order 
elements whose length are uniformly $1$. 
We use $32=2^5$ nodes for discretization, which requires 5-qubits. 
As a test case, we herein set the right hand side of the Poisson equation, $f(\bs{x})$, 
to a step function given in the form of quantum state as 
\begin{align}
\Ket{\bs{F}} = \dfrac{1}{2^{n/2}} \sum_{j=0}^{N-1} (-1)^{j_{n-1}} \Ket{j},
\end{align}
where $j_{n-1}$ is the value of the most significant bit (MSB) of the unsigned binary 
representation of $j$.
This quantum state $\Ket{\bs{F}}$ can be efficiently prepared by the the following unitary 
$U_F$:
\begin{align}
U_F = H^{\otimes n} \left( X \otimes I^{\otimes (n-1)} \right).
\end{align}
Note that, since $\Ket{\bs{F}}$ is a real vector in $\mathbb{R}^{2^n}$ and $K$ is a real 
matrix, the solution of this problem also lies in the real space.

%%%%%%%%%%%%%%%%%%%%%%%%%%%%%%%%%%%%%%%%%%%%
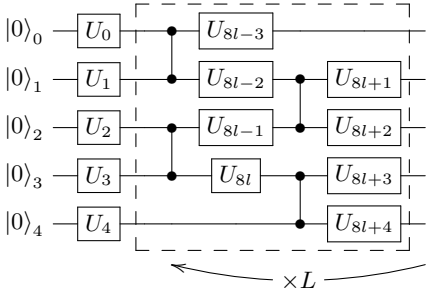
\begin{figure}[t]
    \centering
    \begin{tabular}{c}
    \Qcircuit @C=1.0em @R=.5em {
      \lstickx{\ket{0}_0} & \gate{U_{0}} & \push{\rule{0em}{1.5em}} \qw & \ctrl{1}     & \gate{U_{8l-3}} & \qw          & \qw             & \qw & \\
      \lstickx{\ket{0}_1} & \gate{U_{1}} & \qw & \control \qw & \gate{U_{8l-2}} & \ctrl{1}     & \gate{U_{8l+1}} & \qw & \\
      \lstickx{\ket{0}_2}  & \gate{U_{2}} & \qw & \ctrl{1}     & \gate{U_{8l-1}} & \control \qw & \gate{U_{8l+2}} & \qw & \\
      \lstickx{\ket{0}_3} & \gate{U_{3}} & \qw & \control \qw & \gate{U_{8l}}   & \ctrl{1}     & \gate{U_{8l+3}} & \qw & \\
      \lstickx{\ket{0}_4} & \gate{U_{4}} & \qw & \qw          & \qw             & \control \qw & \gate{U_{8l+4}} & \qw & \\
      & & & & & & &    \arrep{llll}
      \gategroup{1}{3}{5}{7}{.7em}{--}
    }
    \end{tabular}
    \caption{Alternating layered PQC for 5 qubits}
    \label{fig:ALA_PQC}
\end{figure}
%%%%%%%%%%%%%%%%%%%%%%%%%%%%%%%%%%%%%%%%%%%%

\subsubsection{Results and Discussion} \label{sec:poisson_result}
\paragraph{Dependency of results on initial parameters}
%%%%%%%%%%%%%%%%%%%%%%%%%%%%%%%%%%%%%%%%%%%%
\begin{figure}[t]
    \centering
    \includegraphics[width=\columnwidth]{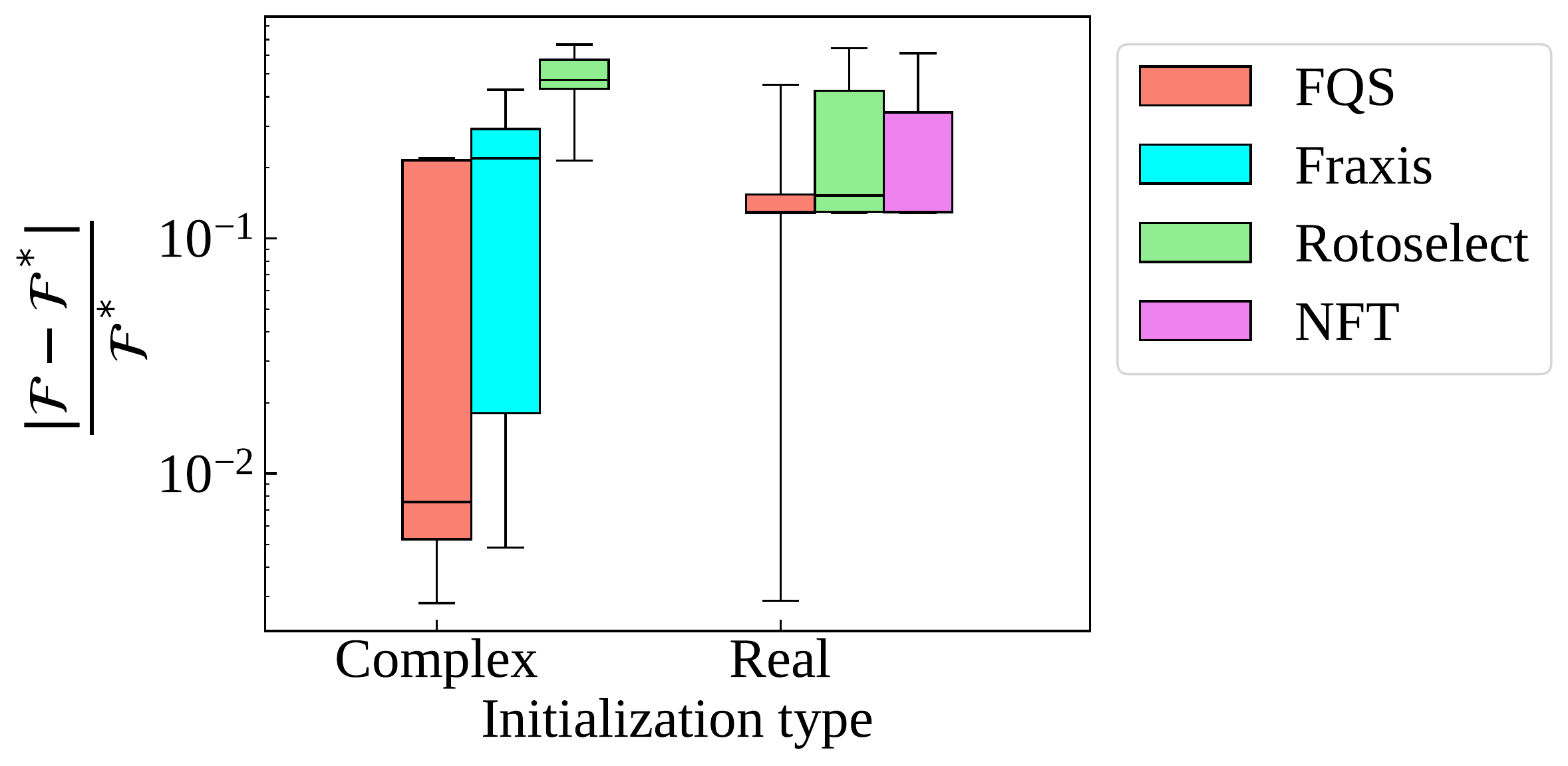}
    \caption{Boxplots of the relative error of the resulting objective function 
    with respect to the exact minimum value ${\mathcal F}^*$. 
    VQA were conducted on the alternating layered PQC of $L=2$. The 30 independent trials were performed for each of the real and the complex space initialization methods.}
    \label{fig:result_5qubits_2layers_ALA}
\end{figure}

\begin{figure*}[t]
    \centering
    \subfloat{\includegraphics[width=0.33\textwidth]{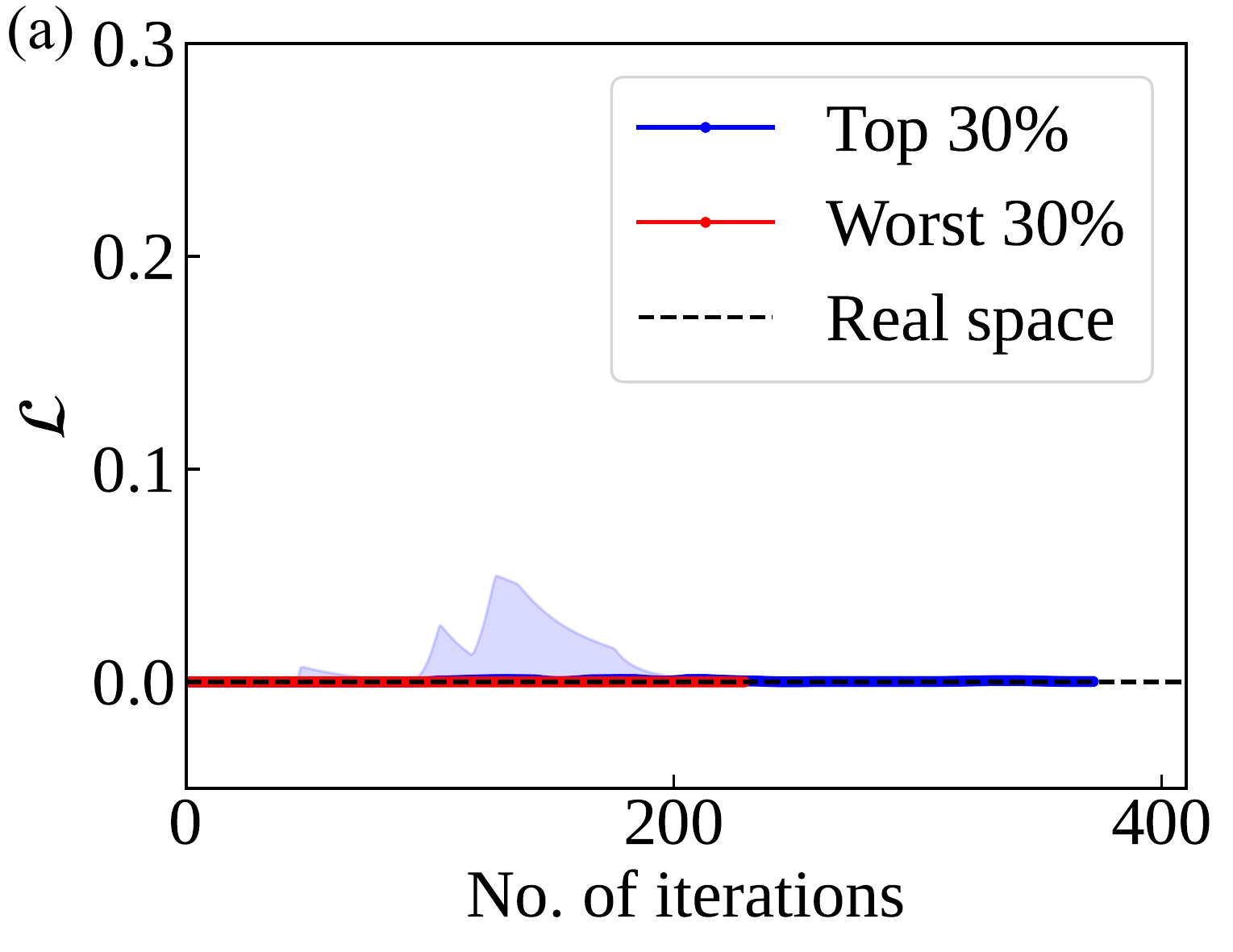}}
    \subfloat{\includegraphics[width=0.33\textwidth]{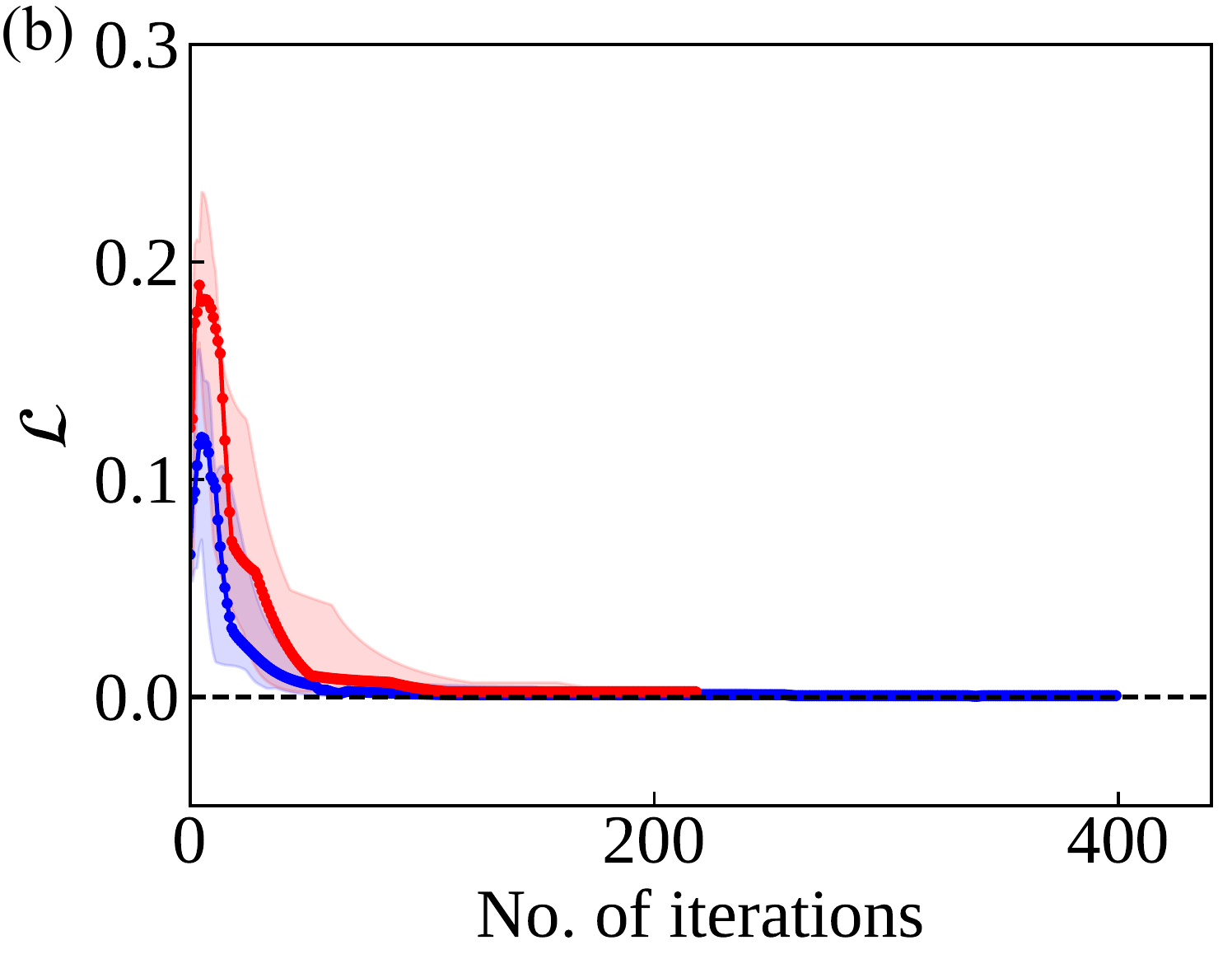}}
    \subfloat{\includegraphics[width=0.33\textwidth]{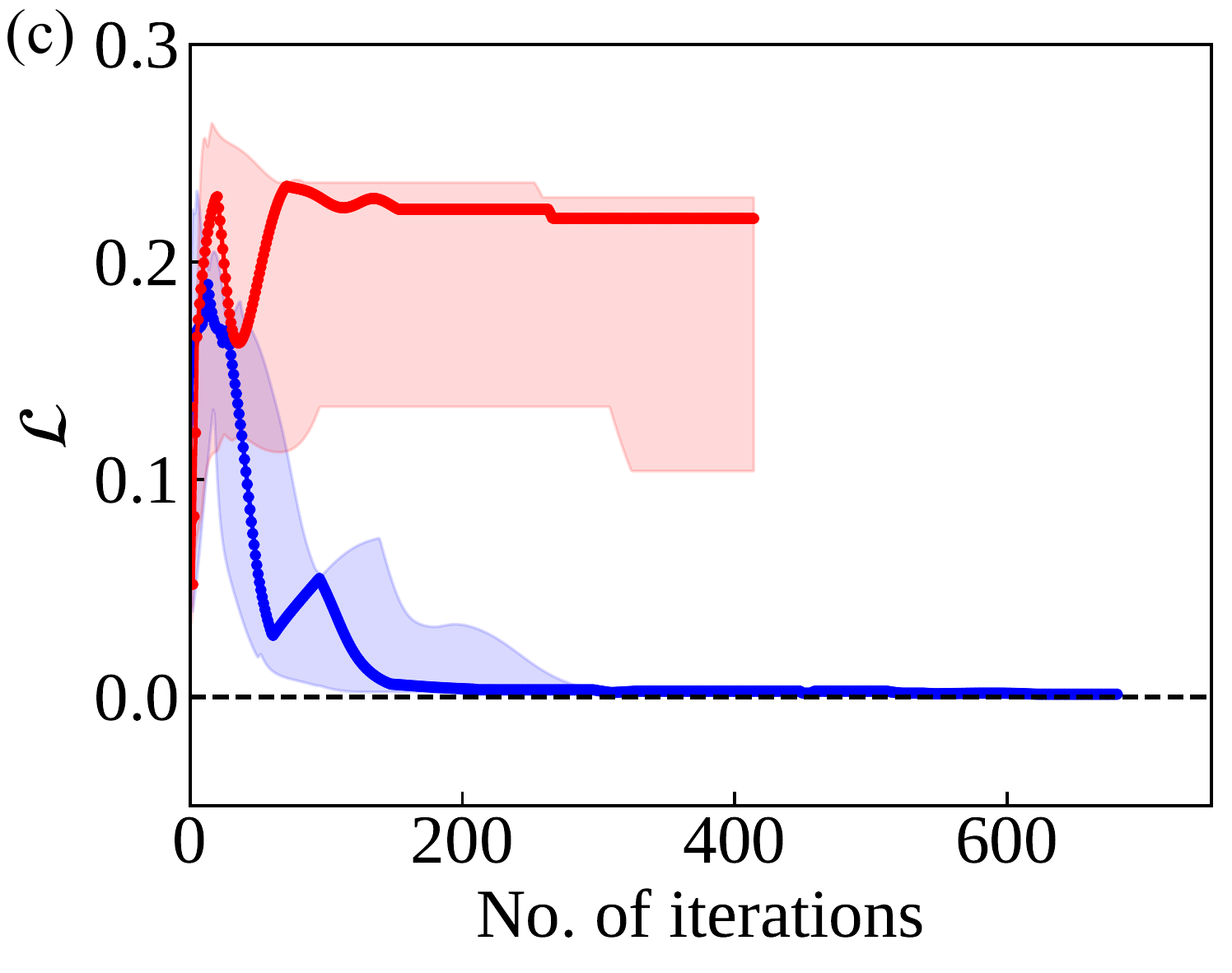}}
    \caption{The distance from the state vector to the real space for the result of (a) FQS with real space initialization, (b) FQS with complex space initialization, and (c) Fraxis.
    The blue plots are the median values of the top 30\% solutions among the 30 independent trials and the blue shaded ranges illustrate the 25- and 75-percentiles. The red plots are the median values of the worst 30\% solutions among the trials and the red shaded regions represent the 25- and 75-percentiles.}
    \label{fig:result_phase_fqs_5qubits_2layers_ALA}
\end{figure*}
%%%%%%%%%%%%%%%%%%%%%%%%%%%%%%%%%%%%%%%%%%%%%%%

This part shows the dependency of results on the initial parameters.

We applied the proposed method with FQS, Fraxis, Rotoselect, and NFT to solving the 
above-described Poisson equation. 
For NFT, we use the $R_y$ gate $R_y(\theta)=e^{-i\theta Y}$ with $Y$ the 
Pauli-$Y$ matrix, for all parameterized single-qubit gates, where the angles of these gates 
are randomly initialized.
For Fraxis, the rotational axes of all single-qubit gates were set randomly on the unit 
spherical surface. 
For FQS and Rotoselect, we employed two initialization strategies termed the 
{\it real space initialization} and the {\it complex space initialization}. 
In the real space initialization, the rotational axes of all single-qubit gates were set to $y$-axis in the beginning of VQA, while their initial angles were randomly generated both for Rotoselect and FQS. 
In the complex space initialization for Rotoselect, the initial rotational axis of single-qubit gates were randomly selected from $x$, $y$, or $z$-axes and the angles of the gates also randomly initialized.
The complex space initialization for FQS randomly set initial unit quaternions; that is, 
a quaternion is sampled uniformly from the three-dimensional unit hyper-sphere. 
We employed the alternating layered PQC~\cite{cerezo2021cost} shown in 
Fig.~\ref{fig:ALA_PQC} and set the number of layers to $L=2$. 
The 30 independent trials with both initializations were performed to examine the 
statistical behaviors of the proposed method.

Figure~\ref{fig:result_5qubits_2layers_ALA} shows boxplots of the relative errors of the 
resulting objective function value after optimization, compared to the exact minimum 
value ${\mathcal F}^*$. 
FQS with the complex space initialization clearly exhibits better capability of obtaining 
lower objective function values, which implies that the use of complex space is effective 
even for problems involving the solution lying in the real space.

To comprehend the behaviors of searching in the complex space more clearly, we analyzed the distance of the state vector during the optimization process from the real space.
Let $r_i$ and $c_i$ denote the real and imaginary parts of the $i$-th complex amplitude of the state vector, respectively. 
When the state vector lies in the real space up to a global phase, all the points $\{(r_i, c_i)\}_{i=0}^{2^n-1}$ on the complex plane lie in a straight line.
Thus, the distance from the state vector to the real space can be evaluated by the distance from the point set $\{(r_i, c_i)\}_{i=0}^{2^n-1}$ to a line.
Let $X$ be a matrix in $\mathbb{R}^{2^n \times 2}$ whose the first and the second columns are $(r_0, \ldots, r_{2^n - 1})^\top$ and $(c_0, \ldots, c_{2^n - 1})^\top$, respectively.
Then, the distance from the point set $\{(r_i, c_i)\}_{i=0}^{2^n-1}$ to a line in the complex plane can be calculated as the minimum eigenvalue of $X^\top X$ based on the principle component analysis (PCA).
Using the two eigenvalues denoted by $\mu_1 > \mu_2$, we define the metric of the distance from the state vector to the real space as
\begin{align}
    \mathcal{L} := \dfrac{\mu_2}{\mu_1 + \mu_2}.
\end{align}
Figures~\ref{fig:result_phase_fqs_5qubits_2layers_ALA} shows the histories of the distance $\mathcal{L}$ of the state vector during the optimization by FQS and Fraxis for the top $30\%$ and the worst $30\%$ trials among 30 trials in terms of the resulting objective function values.
Note that the number of iterations until convergence is not necessarily consistent among 30 trials. 
Hence, we completed the shorter trajectories by the final value at the last iteration such that the lengths of the trajectories were equalized. 
As shown in Fig.~\ref{fig:result_phase_fqs_5qubits_2layers_ALA}(a), the top $30\%$ trials starting from the real space initialization used the complex space in optimization process, while the states of the worst $30\%$ trials stayed in the real space. 
This result implies that the complex space may provide a bypass between the initial and the optimal states even though they are both in the real space. 
On the other hand, Fig.~\ref{fig:result_phase_fqs_5qubits_2layers_ALA}(b) indicates that the worst $30\%$ trials with the complex space initialization had the slightly larger distance to the real space than the top $30\%$ trials, which deteriorated the optimization.
Figure~\ref{fig:result_phase_fqs_5qubits_2layers_ALA}(c) shows that the top $30\%$ trials by Fraxis converged in the real space while the worst $30\%$ trials by Fraxis got stuck with the significantly large distance to the real space.
Compared to FQS and Fraxis, we also observed that the distances to the real space were hardly changed during optimization by Rotoselect. 
Thus, the complex space did not bring benefits to the Rotoselect optimization, which implies 
that the continuous adjustment of the rotational axis, rather than the discrete one, is 
important to make use of the complex space for enhancing the trainability of PQC.

\paragraph{Dependency of results on the number of layers}
%%%%%%%%%%%%%%%%%%%%%%%%%%%%%%%%%%%%%%%%%%%%
\begin{figure}
    \centering
    \begin{tabular}{c}
    \Qcircuit @C=.3em @R=0.5em {
      \lstickx{\ket{0}_0} & \qw & \gate{U_{0}} & \push{\rule{0em}{1.5em}} \qw & \ctrl{1}     & \qw           & \qw          & \qw             & \qw          & \qw             & \qw          & \qw             & \control \qw  & \gate{U_{5l+4}} & \qw             & \qw \\
      \lstickx{\ket{0}_1} & \qw & \gate{U_{1}} & \qw & \control \qw & \gate{U_{5l}} & \ctrl{1}     & \qw             & \qw          & \qw             & \qw          & \qw             & \qw           & \qw             & \gate{U_{5L+5}} & \qw  \\
      \lstickx{\ket{0}_2} & \qw & \gate{U_{2}} & \qw & \qw          & \qw           & \control \qw & \gate{U_{5l+1}} & \ctrl{1}     & \qw             & \qw          & \qw             & \qw           & \qw             & \gate{U_{5L+6}} & \qw  \\
      \lstickx{\ket{0}_3} & \qw & \gate{U_{3}} & \qw & \qw          & \qw           & \qw          & \qw             & \control \qw & \gate{U_{5l+2}} & \ctrl{1} \qw & \qw             & \qw           & \qw             & \gate{U_{5L+7}} & \qw   \\
      \lstickx{\ket{0}_4} & \qw & \gate{U_{4}} & \qw & \qw          & \qw           & \qw          & \qw             & \qw          & \qw             & \control \qw & \gate{U_{5l+3}} & \ctrl{-4} \qw & \qw             & \gate{U_{5L+8}} & \qw  \\
      &&&&&&&&&&&&& \arrep{lllllllll}
      \gategroup{1}{4}{6}{14}{.25em}{--}
    }
    \end{tabular}
    \caption{Cascading-block PQC for 5 qubits}
    \label{fig:Cascading_PQC}
\end{figure}
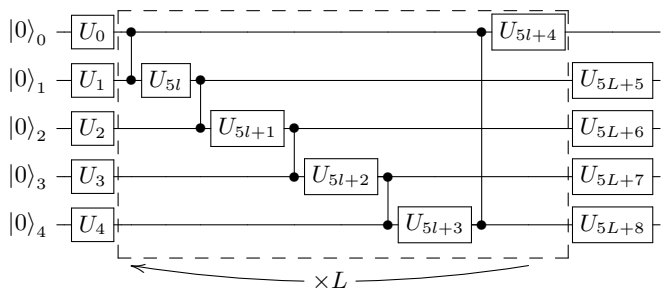
\begin{figure}[t]
    \centering
    \subfloat{\includegraphics[width=\columnwidth]{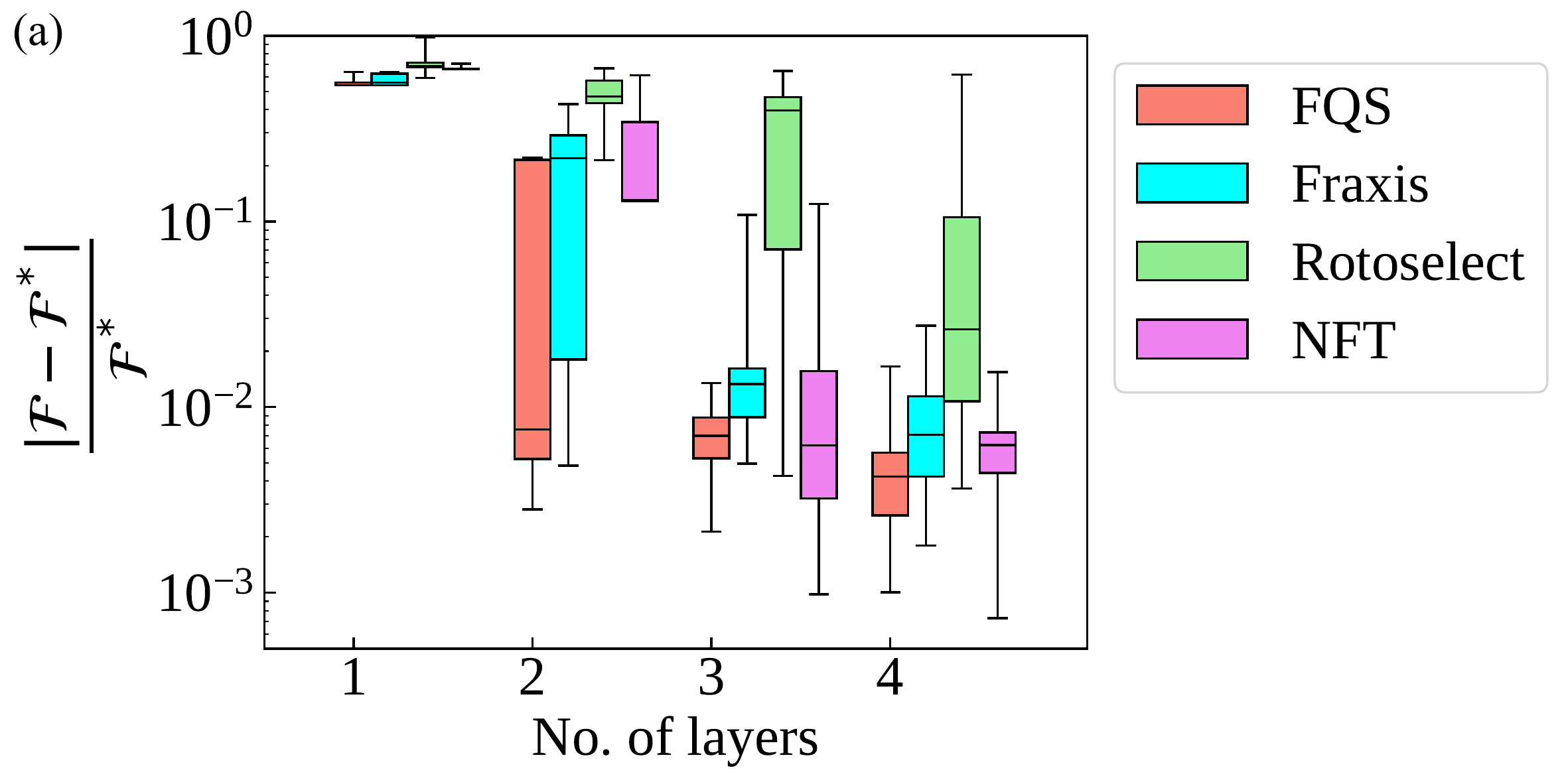}} \\
    \subfloat{\includegraphics[width=\columnwidth]{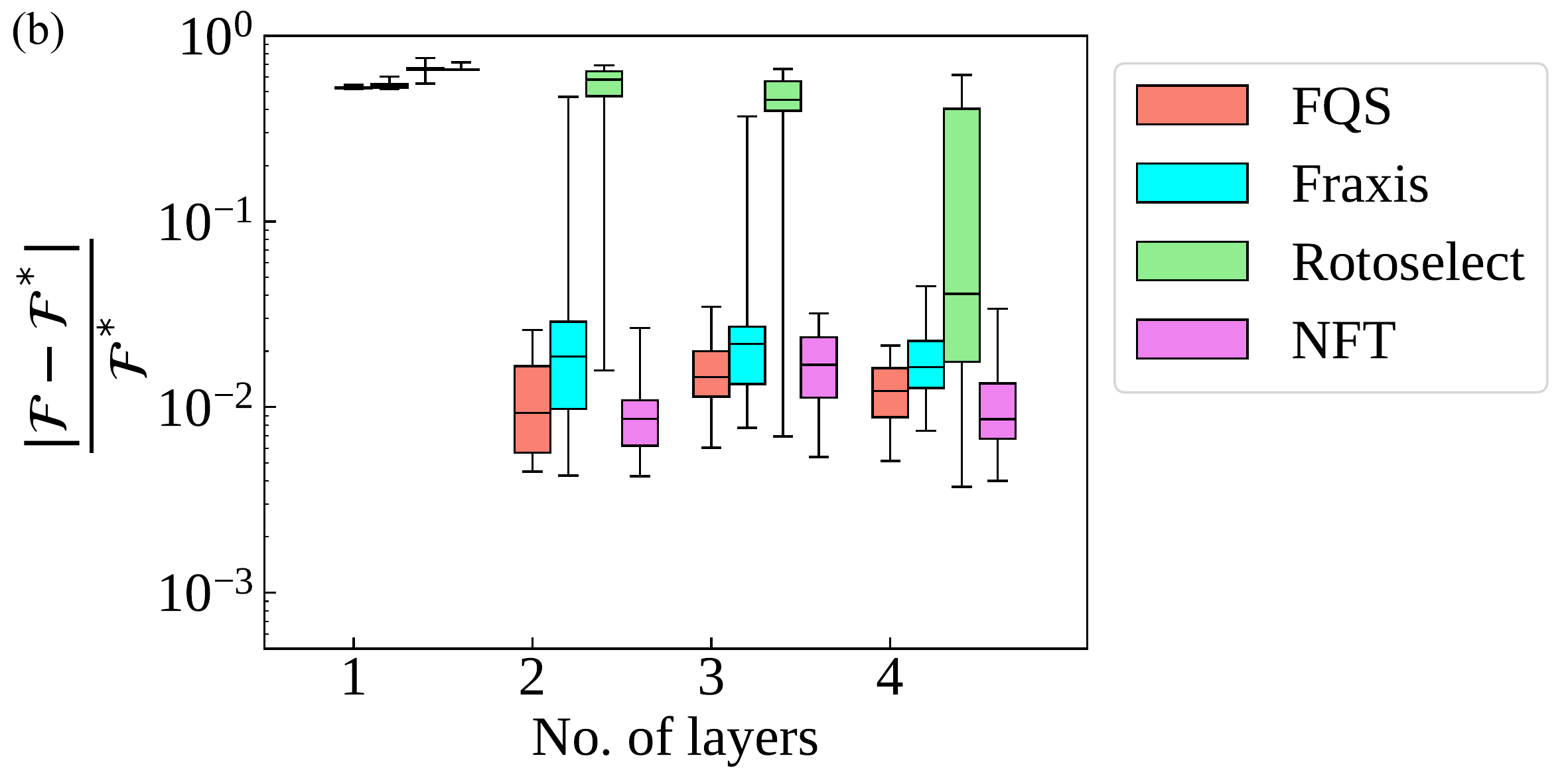}}
    \caption{Boxplots of the relative error between the resulting objective function and 
    the exact minimum value ${\mathcal F}^*$ for (a) alternating layered PQC and (b) Cascading block PQC. 
    The 30 independent trials were performed for each setting of the number of layers.}
    \label{fig:result_5qubits}
\end{figure}
%%%%%%%%%%%%%%%%%%%%%%%%%%%%%%%%%%%%%%%%%%%%
Next, we examined the dependency on the number of layers of PQCs.
Here, we employed two kinds of PQCs; one is the alternating layered PQC in Fig.~\ref{fig:ALA_PQC} and the other is the cascading block PQC in Fig.~\ref{fig:Cascading_PQC}.
The 30 independent trials for NFT, Rotoselect, Fraxis and FQS were performed for each condition.
For Rotoselect and FQS, the complex space initialization was used.

Figure~\ref{fig:result_5qubits} shows boxplots of the relative errors of the objective function with respect to the exact minimum value for alternating layered and cascading block PQCs.
As shown in Fig.~\ref{fig:result_5qubits}(a), the benefits to use the complex search space were conspicuous when the number of layers is relatively small while the converged objective function values obtained by FQS and NFT becomes closer as the number of layers becomes larger.
This is because the expressibility of PQCs become high enough to express the solution even in the real space.
Compared with Fraxis, FQS led to better solutions, which implies that the simultaneous optimization of angles and axes of gates contributes to the efficient shortcut between the initial and the target states using the complex space.
On the other hand, Fig.~\ref{fig:result_5qubits}(b) shows that the objective function values converged by FQS, Fraxis and NFT exhibit no siginificant differences, which means that the effectiveness of the use of the complex space depends on PQCs.
Therefore, we conclude that the continuously controllable angle and axis promotes the effective use of the complex space, which leads to higher quality solutions especially with the shallow depth for a certain PQC structure.
The feature of the PQC that can effectively bring out the benefits to use the complex space should be identified in the future research.

\paragraph{Evaluation by the QASM simulator}
%%%%%%%%%%%%%%%%%%%%%%%%%%%%%%%%%%%%%%%%%%%%
\begin{figure*}[t]
    \centering
    \subfloat{\includegraphics[height=0.15\textheight]{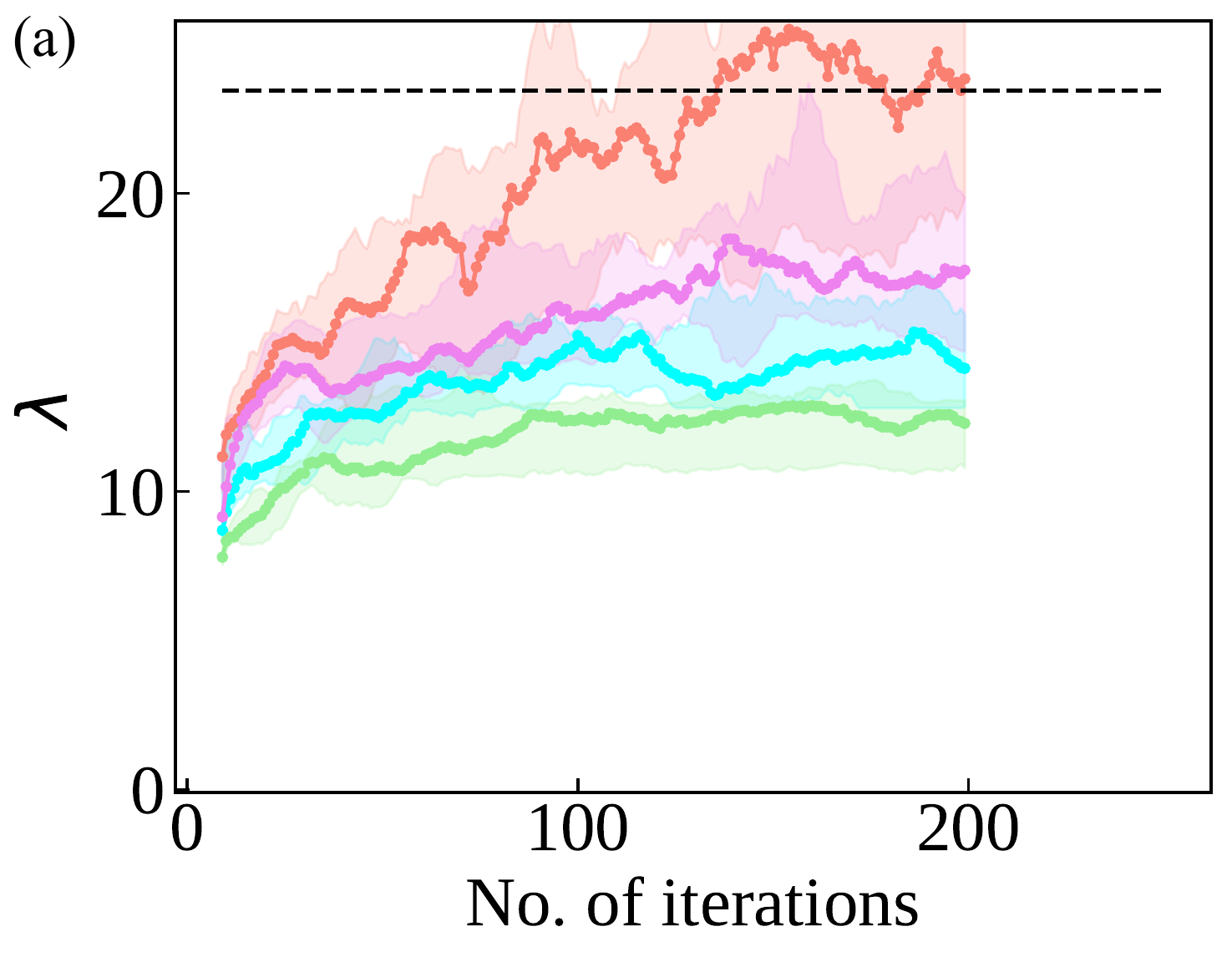}}
    \subfloat{\includegraphics[height=0.15\textheight]{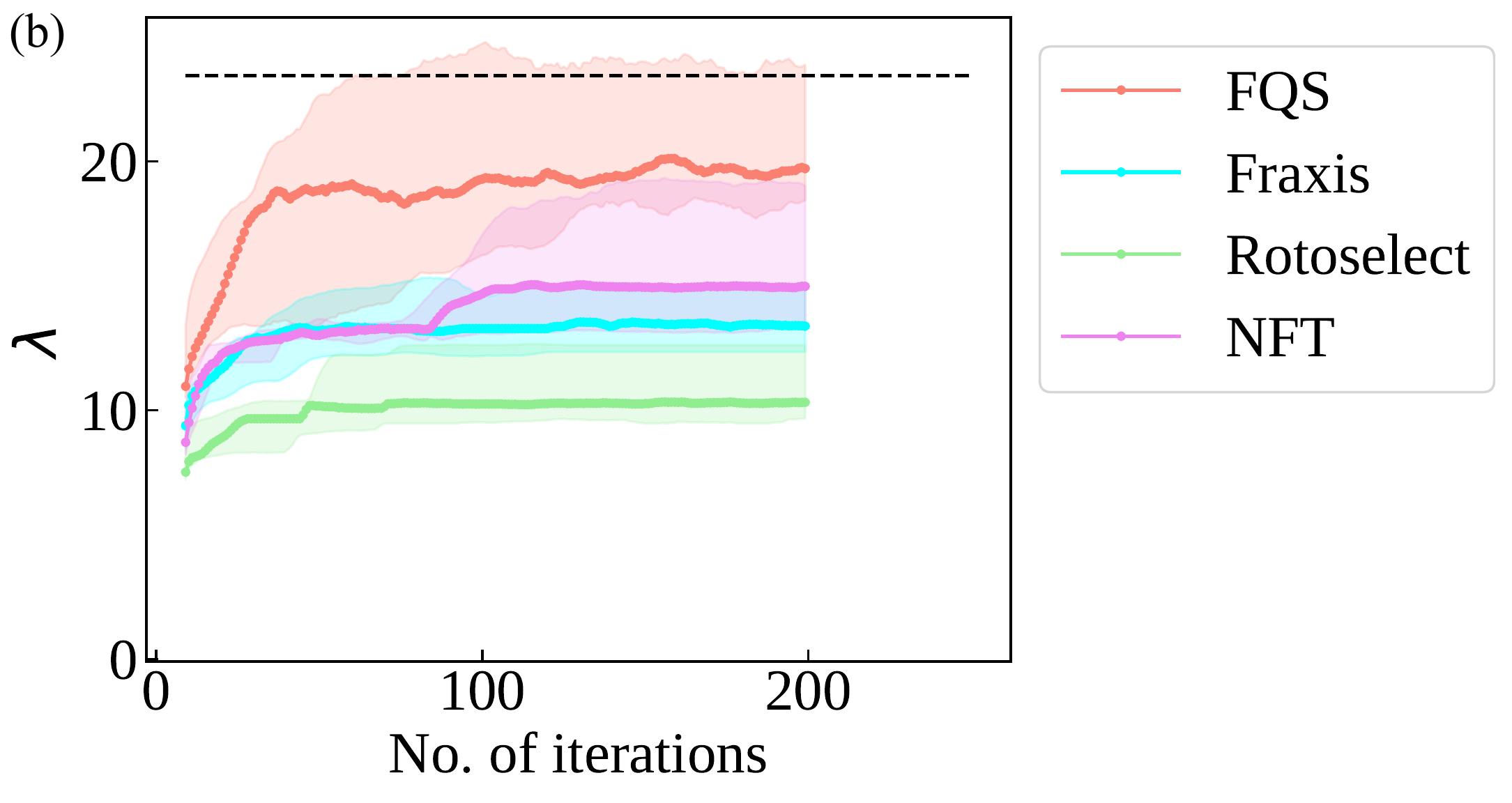}}
    \caption{Histories of the moving average of the maximum eigenvalue of Eq.~\eqref{eq:eigenprob} with the window length of $10$ for (a) $n_\mathrm{s}=10^4$ and (b) $n_\mathrm{s}=10^5$. $n_\mathrm{s}$ is the number of shots for individual quantum circuit. Plots represent the median value and shaded ranges represent ranges from 25- to 75-percentiles of the 10 independent trials. Dashed-line represents the exact optimal value of the objective function.}
    \label{fig:result_5qubits_QASM_lam}
\end{figure*}
%%%%%%%%%%%%%%%%%%%%%%%%%%%%%%%%%%%%%%%%%%%%
%%%%%%%%%%%%%%%%%%%%%%%%%%%%%%%%%%%%%%%%%%%%
\begin{figure*}[t]
    \centering
    \subfloat{\includegraphics[height=0.15\textheight]{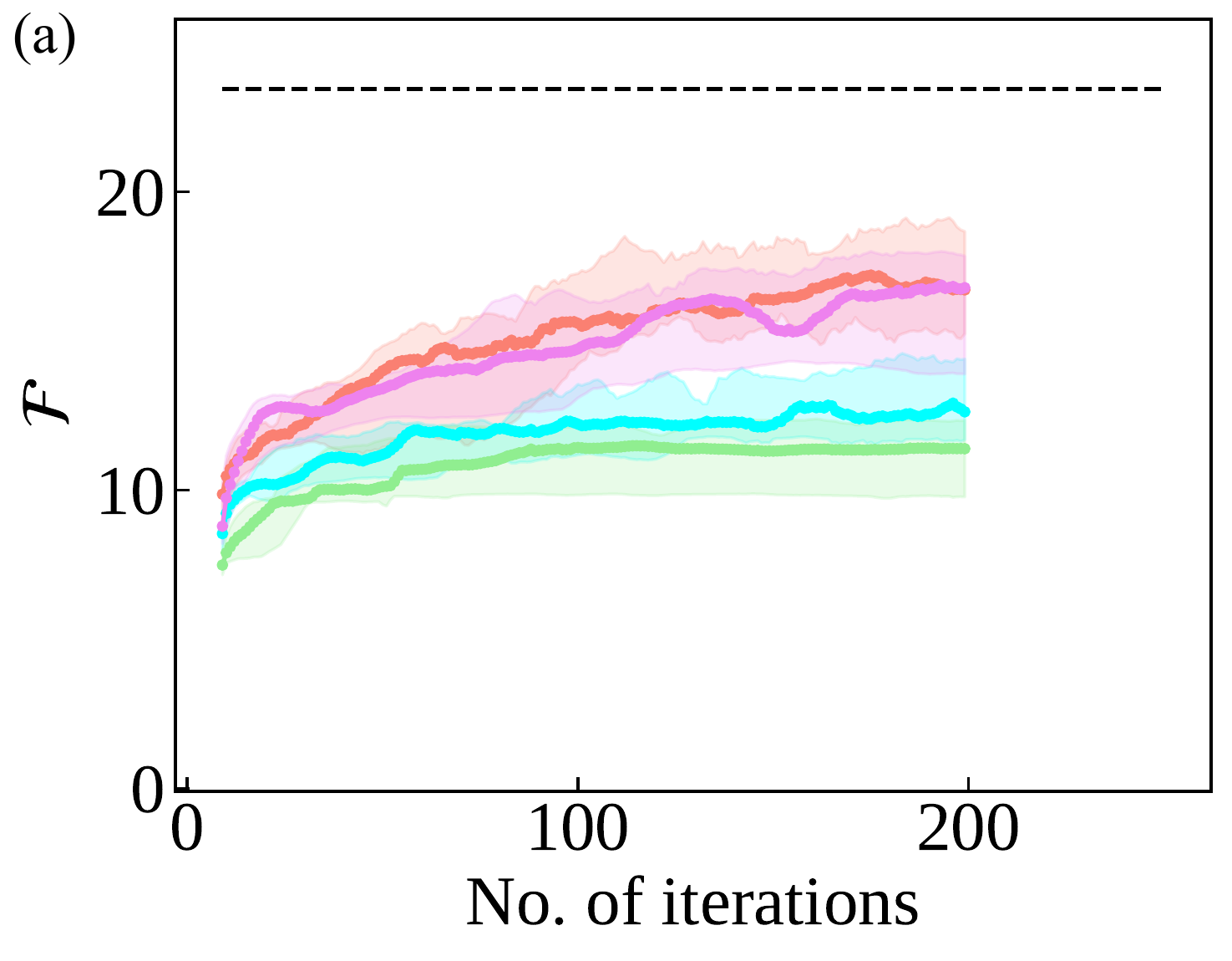}}
    \subfloat{\includegraphics[height=0.15\textheight]{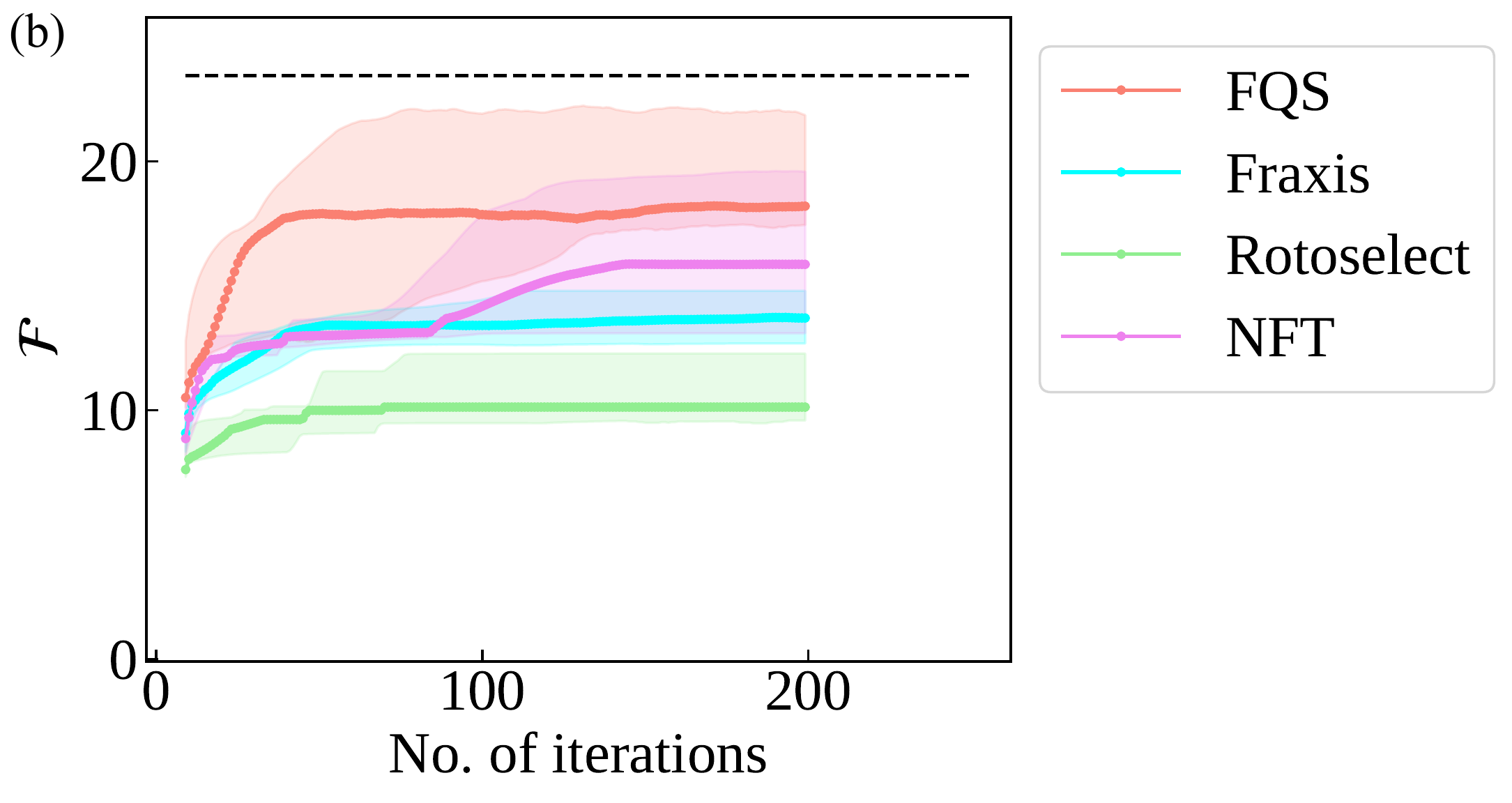}}
    \caption{Histories of the moving average of the objective function value $\mathcal{F}$ with the window length of $10$, which is reevaluated by the statevector simulator for the parameter history of each 10 independent trial in Fig.~\ref{fig:result_5qubits_QASM_lam}~(a) and (b). Plots represent the median value and shaded ranges represent ranges from 25- to 75-percentiles of the 10 independent trials. Dashed-line represents the exact optimal value of the objective function.}
    \label{fig:result_5qubits_QASM_cost}
\end{figure*}
%%%%%%%%%%%%%%%%%%%%%%%%%%%%%%%%%%%%%%%%%%%%
Here, we examined the proposed method under finite-shot settings to study the effect of sampling errors by the QASM simulator in Qiskit~\cite{Qiskit}.
To construct the matrices $S(\rho', A')$ and $S(\rho', B')$ in Eq.~\eqref{eq:eigenprob}, we used the optimal parameter configuration, which can mitigate the sampling errors due to the geometrical symmetry of parameter configurations~\cite{endo2023optimal}.
Although Ref.~\cite{endo2023optimal} obtained the optimal parameter configuration for eigenvalue problems, we confirmed that it is also applicable for generalized eigenvalue problems, as stated in Appendix~\ref{sec:param_config}.
We employed the alternating layered PQC in Fig.~\ref{fig:ALA_PQC} with two layers.
We used the complex space initialization for Rotoselect and FQS.
We set the maximum number of iterations of the proposed method to $200$ and performed the 10 independent trials for NFT, Rotoselect, Fraxis and FQS.

Figure~\ref{fig:result_5qubits_QASM_lam} shows the history of the moving average of the maximum eigenvalue of Eq.~\eqref{eq:eigenprob} with the window length of $10$.
We observe the history became smoother as $n_\mathrm{s}$ increased.
These history indicated that the result of FQS reaches larger eigenvalues than those of the other optimizers, which again demonstrated the benefits to use the complex search space.

We also observe that the maximum eigenvalues exceeded the exact optimal value shown by dashed-line several times.
This would be because the eigenvalue whose matrix components are estimated by finite measurement outcomes includes the bias as discussed in Section~\ref{sec:asymp}.
Figure~\ref{fig:result_5qubits_QASM_cost} shows the history of the objective function value ${\mathcal F}$, which is reevaluated by the statevector simulator for the parameter history of each 10 independent trial in Fig.~\ref{fig:result_5qubits_QASM_lam}.
As $n_\mathrm{s}$ increased, the errors between the estimated objective function value, i.e. the maximum eigenvalue in Fig~\ref{fig:result_5qubits_QASM_lam} and the actual objective function value in Fig.~\ref{fig:result_5qubits_QASM_cost}  became small. 
We also observed that the advantage of FQS over NFT was not distinct for $n_\mathrm{s}=10^4$.
This would be because the effect of sampling errors on the objective function after parameter update in FQS is larger than that in NFT due to the larger difference between the maximum and minimum eigenvalues in FQS compared to NFT in Eq~\eqref{eq:perturbed_cost}, which comes from that fact that the variational space of FQS includes that of NFT.
That is, FQS has more potential to improve the objective function, but also has a susceptibility to sampling errors.
In the present study, we set the same number of shots to measure all quantum circuits for simplicity.
We hope to conduct our future work to allocate the optimal number of shots for each quantum circuit to mitigate the sampling errors.

\subsection{Eigen-frequency problem of a linear elastic solid}

\subsubsection{Problem statement}

Next, let us consider the problem of finding the lowest eigen-frequency of a linear elastic 
solid, which is important in engineering to design mechanical structures for maximally 
improving their dynamic behaviors~\cite{ma1994structural}.
Because this application focuses on finding the lowest eigen-frequency that is obtained as 
the minimum of the objective function by the proposed method, one does not have to retrieve 
all the components of the optimized quantum state $\ket{\psi (\{ \bs{q}_d \})}$.

Let $\Omega \subset \mathbb{R}^m$ denote an open bounded set where $m$ is the number of spatial dimensions.
The eigenvalue problem of a linear elastic solid occupying $\Omega \subset \mathbb{R}^m$ consists of the stress equilibrium equation, the constitutive law, and the displacement-strain relationship, which are respectively given as
\begin{align} \label{eq:solid}
\begin{cases}
    & \nabla \cdot \bs{\sigma(\bs{x})} = -\varrho \omega^2 \bs{u}(\bs{x}), \\
    & \bs{\sigma(\bs{x})} = \bs{C} : \bs{\varepsilon(\bs{x})}, \\
    & \bs{\varepsilon(\bs{x})} = \dfrac{1}{2} \left( \nabla \bs{u} + \nabla \bs{u}^\top \right).
\end{cases}
\end{align}
Here, $\bs{x} \in \Omega$ is the spatial coordinate, $\varrho$ is the density of the solid, $\omega$ is the eigenfrequency, $\bs{u} \in \mathbb{R}^m$ is the displacement field, $\bs{\sigma}$ is the stress tensor, $\bs{C}$ is the elastic tensor, and $\bs{\varepsilon}$ is the strain tensor.
The operator $:$ represents the double dot product, which acts as $\bs{C}:\bs{\varepsilon} = C_{ijkl} \varepsilon_{kl}$ with the Einstein's summation convention.
Assuming that the solid is an isotropic material, the elastic tensor can be represented using the Young's modulus $E$ and the Poisson ratio $\nu$ as
\begin{align} \label{eq:elastic_tensor}
C_{ijkl} = \dfrac{E \nu}{(1 + \nu)(1 - 2\nu)} \delta_{ij} \delta_{kl} + \dfrac{E}{2(1 + \nu)} \left( \delta_{ik} \delta_{jl} + \delta_{il} \delta_{jk} \right),
\end{align}
where $\delta_{ij}$ is the Kronecker's delta.
The Young's modulus represents the modulus of elasticity in tension or compression of 
a solid material. 
The Poisson ratio, which is in the range of $-1 < \nu < 0.5$, represents a measure of the deformation of a material in directions perpendicular to the specific direction of loading.
These equations compose the governing equation of the displacement field $\bs{u}$.
We consider that the Dirichlet boundary condition is imposed on the boundary $\Gamma_\mathrm{D} \subset \partial \Omega$ as
\begin{align}
    \bs{u}(\bs{x}) = \bs{0} \quad \text{on}~ \bs{x} \in \Gamma_\mathrm{D}.
\end{align}
Discretizing this governing equation by FEM~\cite{allaire2007numerical, hughes2012finite} yields a GEP as follows:
\begin{align}
K \bs{U} = \lambda M\bs{U},
\end{align}
where $K \in \mathbb{R}^{(Nm) \times (Nm)}$ is the stiffness matrix, $M \in \mathbb{R}^{(Nm) \times (Nm)}$ is the mass matrix, $\bs{U} \in \mathbb{R}^{Nm}$ is the displacement field vector, and $\lambda=\omega^2$ is the eigenvalue.
The discretization procedure by FEM is detailed in Appendix~\ref{sec:fem_solid}. 
Both $K$ and $M$ are positive-definite. 
Hence, the proposed method can be applied to analyzing the eigen-frequency of a linear 
elastic solid. 
In particular, $\tr{B \rho}=\tr{M \rho}$ and $\tr{A \rho}=\mathrm{tr} (K \rho)$ can be 
evaluated by XBM~\cite{kondo2022computationally} as described before. 

\begin{figure}[t]
    \centering
    \includegraphics[width=\columnwidth]{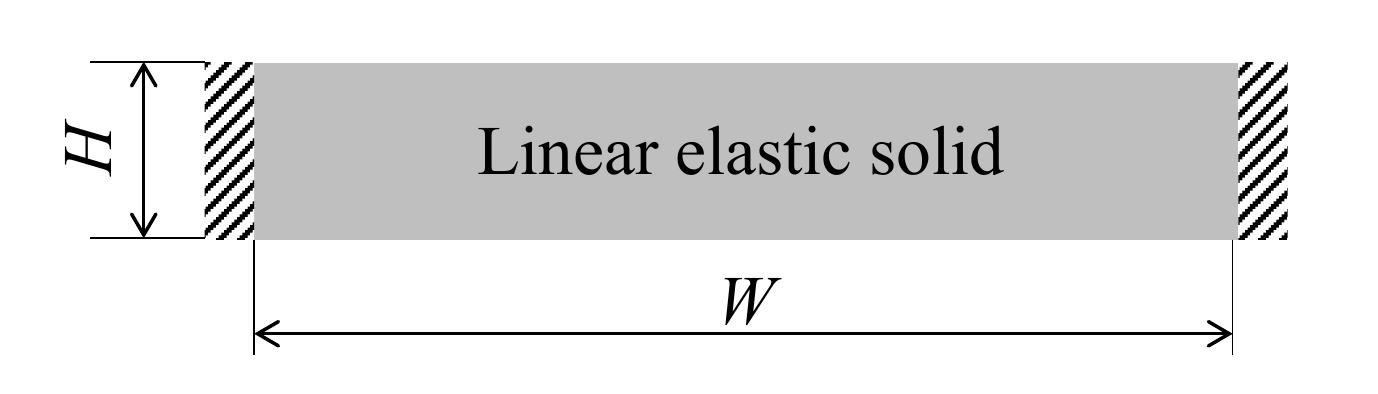}
    \caption{Simulation model of a linear elastic solid. Both sides are fixed.}
    \label{fig:model_solid}
\end{figure}

\begin{figure}[t]
    \centering
    \includegraphics[width=\columnwidth]{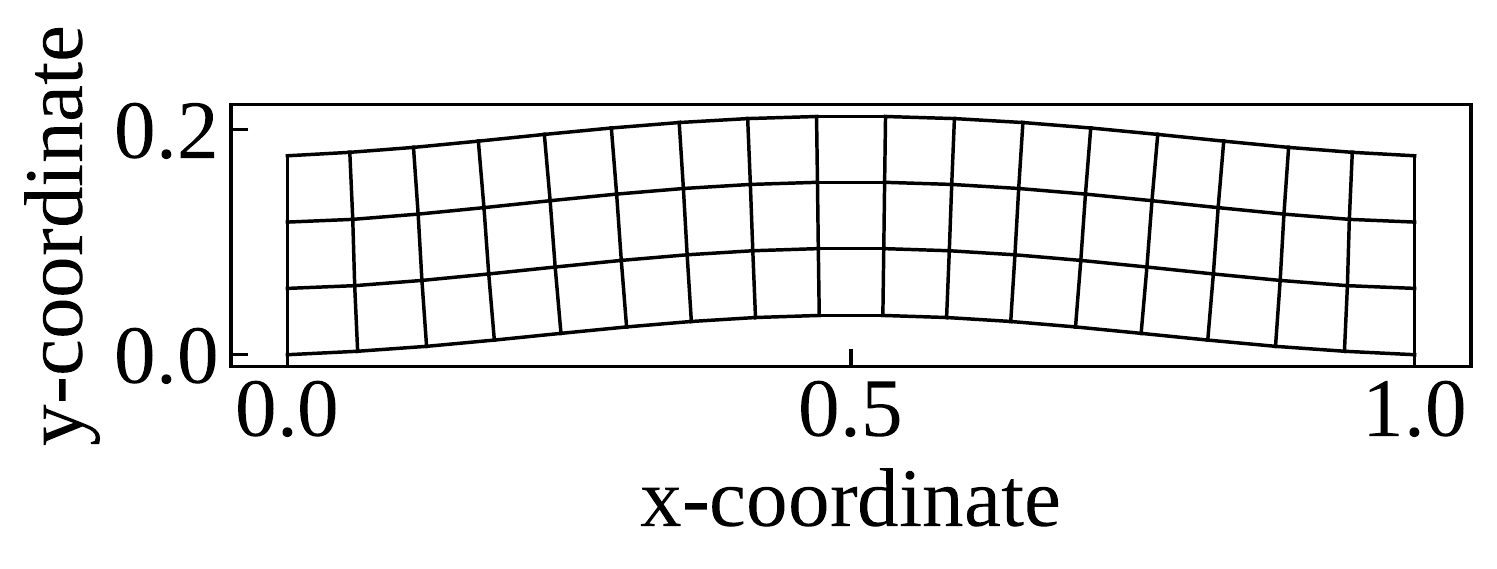}
    \caption{Deformation mode corresponding to the ground state.}
    \label{fig:eigenstate_solid}
\end{figure}

In this paper, we focus on the eigenvalue problem of a two-dimensional beam structure in the 
plain stress shown in Fig.~\ref{fig:model_solid}.
This beam structure is discretized using the first-order quadrilateral elements with 
$N_x \times N_y$ nodes where $N_x$ is the number of nodes along the $x$-axis and $N_y$ 
is the number of nodes along the $y$-axis. 
Since the both sides are fixed, the number of degrees of freedom is $2(N_x - 2)N_y$, 
which results in the required number of qubits 
$n = \lceil \log_2 (2(N_x - 2)N_y) \rceil = 1 + \lceil \log_2 ((N_x - 2)N_y) \rceil$. 
Here, we set $W=1$, $H=3/17$, $N_x = 18$, and $N_y = 4$, which leads to $n=7$. 
The nodes of the finite elements are numbered from bottom left to top right in 
Fig.~\ref{fig:eigenstate_solid}, and the binary representation of the node number is 
mapped to the computational basis of qubits. 
Note that the order of numbering and its mapping into qubits affect on the feasible solution generated by a PQC; this design problem is open for future work.

The Young's modulus and the Poisson ratio were respectively set to 200 GPa and 0.3. 
The material density $\varrho$ was set to $7850~\mathrm{kg/m^3}$. 
These settings are typical for iron.
Figure~\ref{fig:eigenstate_solid} illustrates the deformation mode corresponding to the 
ground state whose eigenvalue $\lambda$ is $2.55 \times 10^{7}~\mathrm{1/s^2}$ and the 
eigen-frequency $\omega / (2\pi) = \sqrt{\lambda} / (2\pi)$ is $8.04 \times 10^2~\mathrm{Hz}$.
Note that the solution, i.e., the displacement vector $\bs{u}$ of this problem, takes 
real values.

\subsubsection{Results and Discussion}

Here, we employed the alternating layered PQC in Fig.~\ref{fig:ALA_PQC} and used FQS, Fraxis and NFT. 
The 30 independent trials were performed for each method, and the complex space initialization was used for FQS.

\begin{figure}[t]
    \centering
    \includegraphics[width=\columnwidth]{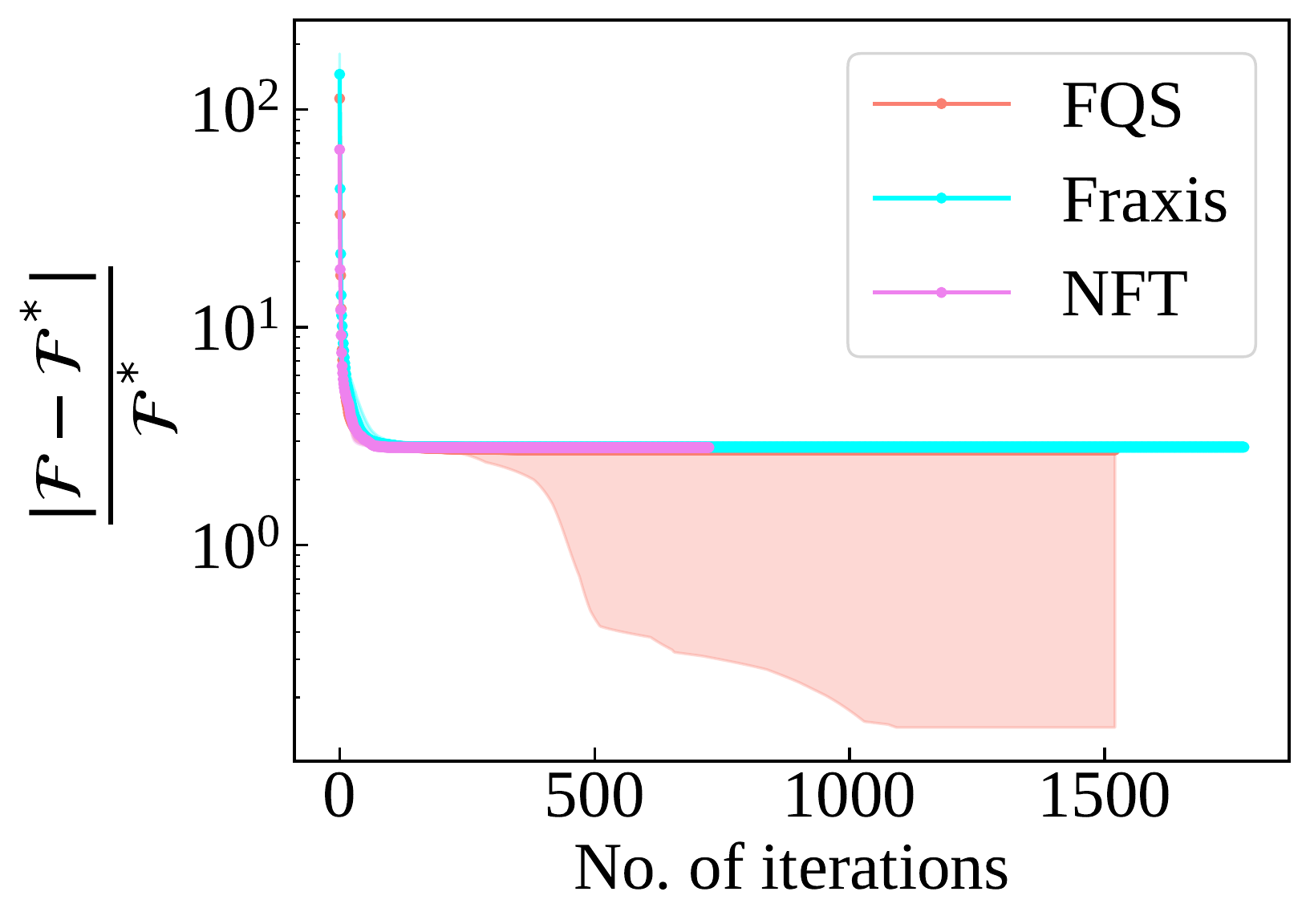}
    \caption{History of the relative error of the objective function value $\mathcal{F}$ to 
    the exact minimum value ${\mathcal F}^*$. 
    Plots represent the median value and shaded ranges represent ranges from 25- to 75- percentiles.}
    \label{fig:result_solid_ALA}
\end{figure}

Figure~\ref{fig:result_solid_ALA} shows histories of the relative error of the objective 
function value $\mathcal{F}$ to the exact minimum value ${\mathcal F}^*$, which were 
obtained from independent 30 optimizations. 
To calculate the median and percentiles, we took the same statistical treatment as described in Sec.~\ref{sec:poisson_result}. 
This figure indicates that half of trials of FQS and all trials of NFT and Fraxis got stuck 
in poor local optima where the relative errors are larger than 1. 
Nonetheless, the figure shows that the quarter of trials of FQS obtained much lower 
objective function values with the relative error of around 0.1. 
Thus, we again conclude that simultaneous and continuous optimization of angles and axes of gates is essential to fully exploiting the advantage to use the complex space. 
However, we actually observed that the many trials got stuck in poor local minima even 
when using this method (i.e., FQS with complex-valued initialization). 
We hope to address to construct a PQC structure which is suitable for such engineering 
problems in our future research.

\section{Conclusions} \label{sec:conclusions}

This paper proposed a VQA for calculating the minimum or maximum eigenvalue of a GEP characterized by Hermitian matrices, based on sequential quantum optimization techniques. 
First, we formulated the GEP as the minimization problem of the generalized Rayleigh quotient, 
which is reformulated in the fractional form of the expectations of two Hermitians as the objective function of the VQA.
We then showed that the objective function can be analytically minimized with respect to a single-qubit gate by solving GEP of a $4 \times 4$ matrix.
Second, we showed that an SLE characterized by a positive-definite matrix can be formulated as a GEP and thus be attacked using the proposed method. 
Finally, we demonstrated applications to two important engineering problems formulated with 
FEM; one is for solving an SLE derived from a Poisson equation, and the other is eigen-frequency 
analysis of a linear elastic solid. 
Through the demonstration, we found that a problem having a real-valued solution can be solved 
more effectively using quantum gates generating a complex-valued state vector. 
There are several open problems including; how to appropriately assign the ordering of  
optimization of single-qubit gates in PQC; to solve the eigen-frequency problem of a general 
elastic solid, how to suitably make numbering and mapping of finite element nodes into qubits; 
how to construct a PQC suitable for exploiting the complex Hilbert space to efficiently 
solve real-valued engineering problems. 
Also, as a future work we plan to determine an optimal number of shots for each quantum circuit for mitigating the sampling errors.

\appendix
\section{Minimum and maximum of Rayleigh quotient} \label{sec:rayleigh}
Here, we consider to find $\bs{w}$ that minimizes or maximizes the generalized Rayleigh quotient $R(\bs{w}; A, B)$ in Eq.~\eqref{eq:g_rayleigh}. 
The following theorem generalizes the previously known fact that holds when $A$ and $B$ are real-valued positive-definite matrices, e.g., see~\cite{yu2011kernel}.  

\begin{thm}
Let $A \in \mathbb{C}^{N \times N}$ and $B \in \mathbb{C}^{N \times N}$ be a Hermitian and a positive-definite Hermitian, respectively, and let $(\lambda_\mathrm{min}, \bs{v}_\mathrm{min})$ and $(\lambda_\mathrm{max}, \bs{v}_\mathrm{max})$ denote the minimum and maximum eigenvalue-eigenvector pairs of the generalized eigenvalue problem $A\bs{v} = \lambda B \bs{v}$, respectively.
Then, it holds that
\begin{align}
    &\lambda_\mathrm{min}\bs{w}^\dagger B \bs{w} \leq \bs{w}^\dagger A \bs{w} \leq \lambda_\mathrm{max}\bs{w}^\dagger B \bs{w} \quad \forall \bs{w} \in \mathbb{C}^N \label{eq:rayleigh_order} \\
    &\lambda_\mathrm{min} = \min_{\bs{w}} R(\bs{w}; A, B) \text{ s.t. } \bs{w} \neq \bs{0} \label{eq:rayleigh_min} \\
    &\lambda_\mathrm{max} = \max_{\bs{w}} R(\bs{w}; A, B) \text{ s.t. } \bs{w} \neq \bs{0}, \label{eq:rayleigh_max}
\end{align}
where $R(A, B; \bs{w})$ is the generalized Rayleigh quotient defined as
\begin{align}
    R(\bs{w}; A, B) := \dfrac{\bs{w}^\dagger A \bs{w}}{\bs{w}^\dagger B \bs{w}}.
\end{align}
\end{thm}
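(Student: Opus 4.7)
The plan is to reduce the generalized eigenvalue problem to an ordinary one by a whitening transformation, and then apply the standard Rayleigh--Ritz theorem for Hermitian matrices. Concretely, since $B$ is positive-definite Hermitian, it admits a unique positive-definite Hermitian square root $B^{1/2}$ which is invertible. Define
\begin{equation}
\tilde{A} := B^{-1/2} A B^{-1/2}, \qquad \bs{y} := B^{1/2} \bs{w}.
\end{equation}
Because $A$ is Hermitian, so is $\tilde A$, and the map $\bs{w} \mapsto \bs{y}$ is a bijection of $\mathbb{C}^N\setminus\{\bs 0\}$ onto itself.

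First I would show that the generalized eigenvalue problem $A\bs{v} = \lambda B\bs{v}$ is equivalent to the ordinary one $\tilde A \bs{z} = \lambda \bs{z}$ under $\bs{z} = B^{1/2}\bs{v}$. This is a direct manipulation: multiply both sides by $B^{-1/2}$ on the left and insert $B^{-1/2}B^{1/2}=I$ between $A$ and $\bs{v}$. Consequently $(\lambda_\mathrm{min},\lambda_\mathrm{max})$ coincide with the extremal eigenvalues of $\tilde A$, which are real by the spectral theorem, and the corresponding eigenvectors of $\tilde A$ are $\bs{z}_\mathrm{min} = B^{1/2}\bs{v}_\mathrm{min}$, $\bs{z}_\mathrm{max}=B^{1/2}\bs{v}_\mathrm{max}$.

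Next I would rewrite the Rayleigh quotient in the transformed variables:
\begin{equation}
R(\bs{w};A,B) = \frac{\bs{w}^\dagger A \bs{w}}{\bs{w}^\dagger B \bs{w}} = \frac{\bs{y}^\dagger \tilde A \bs{y}}{\bs{y}^\dagger \bs{y}},
\end{equation}
using that $\bs{w}^\dagger B \bs{w} = (B^{1/2}\bs{w})^\dagger (B^{1/2}\bs{w}) = \bs{y}^\dagger \bs{y}$ since $B^{1/2}$ is Hermitian. The right-hand side is the ordinary Rayleigh quotient of the Hermitian matrix $\tilde A$. By the standard Rayleigh--Ritz theorem, its infimum and supremum over $\bs{y}\neq\bs 0$ are attained precisely at the eigenvectors $\bs{z}_\mathrm{min}$ and $\bs{z}_\mathrm{max}$ of $\tilde A$, and equal $\lambda_\mathrm{min}$ and $\lambda_\mathrm{max}$ respectively. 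Since the change of variables is a bijection of nonzero vectors, this yields \eqref{eq:rayleigh_min} and \eqref{eq:rayleigh_max}. The sandwich inequality \eqref{eq:rayleigh_order} then follows immediately by clearing the strictly positive denominator $\bs{w}^\dagger B \bs{w} > 0$ (strict positivity is guaranteed by $B\succ 0$), treating the trivial case $\bs{w}=\bs 0$ separately where both sides vanish.

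The main obstacle, if any, is keeping the complex-Hermitian bookkeeping clean: one must justify that $B^{1/2}$ exists and is Hermitian positive-definite (from the spectral decomposition of $B$), that $\tilde A$ inherits Hermiticity from $A$ (using $(B^{-1/2})^\dagger = B^{-1/2}$), and that the extremal eigenvalues of the generalized problem are real (inherited via the equivalence with $\tilde A$). None of these are hard, but they are the only places where the complex-valued, possibly indefinite $A$ differs from the classical real positive-definite setting of \cite{yu2011kernel}; once they are established, the proof is essentially the classical argument applied to $\tilde A$.
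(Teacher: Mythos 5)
Your proposal is correct and rests on the same key idea as the paper's proof: the congruence transformation $\bs{w}\mapsto B^{1/2}\bs{w}$ reducing the problem to the spectral theory of the Hermitian matrix $B^{-1/2}AB^{-1/2}$. The only cosmetic difference is ordering — the paper establishes the sandwich inequality first (by contradiction, using that a negative quadratic form would force an eigenvalue of $B^{-1/2}AB^{-1/2}$ below $\lambda_\mathrm{min}$) and then divides by $\bs{w}^\dagger B\bs{w}>0$, whereas you invoke Rayleigh--Ritz directly for the min/max characterization and deduce the inequality afterwards.
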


\begin{proof}
We first prove that Eq.~\eqref{eq:rayleigh_order} holds by contradiction.
Let us assume that $\bs{w}^\dagger A \bs{w} < \lambda_\mathrm{min}\bs{w}^\dagger B \bs{w}$.
Then, it holds that
\begin{align}
\bs{w}'^\dagger \left( B^{-1/2} A B^{-1/2} - \lambda_\mathrm{min} I \right) \bs{w}' < 0, \label{eq:leq_contradict}
\end{align}
where $\bs{w}'=B^{1/2}\bs{w}$, and it is ensured that $B^{1/2}$ and $B^{-1/2}$ exist because $B$ is positive-definite.
Eq.~\eqref{eq:leq_contradict} implies that the minimal eigenvalue of $B^{-1/2} A B^{-1/2}$ is lower than $\lambda_\mathrm{min}$.
However, this contradicts that $\lambda_\mathrm{min}$ is the minimal eigenvalue because the eigenvalues of $B^{-1/2} A B^{-1/2}$ correspond to those of $A\bs{v} = \lambda B \bs{v}$, and consequently $\lambda_\mathrm{min}\bs{w}^\dagger B \bs{w} \leq \bs{w}^\dagger A \bs{w}$.
Similarly, we obtain $\bs{w}^\dagger A \bs{w} \leq \lambda_\mathrm{max}\bs{w}^\dagger B \bs{w}$, which proves Eq.~\eqref{eq:rayleigh_order}.

Suppose $\bs{w} \neq \bs{0}$.
Since $B$ is positive definite, $\bs{w}^\dagger B \bs{w}$ is positive.
Dividing Eq.~\eqref{eq:rayleigh_order} by $\bs{w}^\dagger B \bs{w}$, we obtain
\begin{align}
    \lambda_\mathrm{min} \leq R(\bs{w}; A, B) \leq \lambda_\mathrm{max}.
\end{align}
That is, the generalized Rayleigh quotient is bounded by $\lambda_\mathrm{min}$ and $\lambda_\mathrm{max}$.
Furthermore, it holds that $\lambda_\mathrm{min} = R(\bs{v}_\mathrm{min}; A, B)$ and $\lambda_\mathrm{max} = R(\bs{v}_\mathrm{max}; A, B)$, which proved Eqs.~\eqref{eq:rayleigh_min} and \eqref{eq:rayleigh_max}.
\end{proof}

\section{The enlargement of matrices \textit{A} and \textit{B} for the \textit{n}-qubit system} \label{sec:padding}
Recall that we focus on a generalized eigenvalue problem (GEP) in Eq.~\ref{eq:g_eig} assuming $A \in \mathbb{C}^{N \times N}$ is a Hermitian matrix and $B \in \mathbb{C}^{N \times N}$ is a positive-definite Hermitian matrix.
The proposed method deals with the state vector of $n$-qubit system to represent the eigenvectors $\bs{v} \in \mathbb{C}^N$ of a GEP.
When $\log_2 N$ is not integer, we introduce enlarged matrices $\tilde{A} \in \mathbb{C}^{2^n \times 2^n}$ and $\tilde{B} \in \mathbb{C}^{2^n \times 2^n}$ to solve a GEP with the $n$-qubit system.
In the following, we consider two cases depending on the eigenvalue of interest.

\paragraph{When $\lambda_\mathrm{min} < 0$ or $\lambda_\mathrm{max} > 0$ is the eigenvalue of interest}
We introduce the enlarged matrices $\tilde{A} \in \mathbb{C}^{2^n \times 2^n}$ and $\tilde{B} \in \mathbb{C}^{2^n \times 2^n}$ defined as
\begin{align}
    &\tilde{A} := \begin{bmatrix}
    A & O_{N \times (2^n - N) } \\
    O_{N \times (2^n - N) }^\dagger & O_{(2^n - N) \times (2^n - N) }
    \end{bmatrix} \\
    &\tilde{B} := \begin{bmatrix}
    B & O_{N \times (2^n - N) } \\
    O_{N \times (2^n - N) }^\dagger & I_{(2^n - N) \times (2^n - N) }
    \end{bmatrix},
\end{align}
where $I_{ (2^n - N) \times (2^n - N) }$ represents $(2^n - N) \times (2^n - N)$ identity matrix, and $O_{(2^n - N) \times (2^n - N)}$ and $O_{N \times (2^n - N)}$ represents $(2^n - N) \times (2^n - N)$ and $N \times (2^n - N)$ zero matrices, respectively.
Considering the GEP expressed as
\begin{align} \label{eq:enlarged_gep}
    \tilde{A} \tilde{\bs{v}} = \lambda \tilde{B} \tilde{\bs{v}},
\end{align}
we obtain
\begin{align} \label{eq:enlarged_gep_sub}
\begin{cases}
    & A \bs{v} = \lambda B \bs{v} \\
    & \bs{0} = \lambda \bs{v}', 
\end{cases}
\end{align}
where $\tilde{\bs{v}} = (\bs{v}^\top, \bs{v}'^\top)^\top$.
Eq.~\eqref{eq:enlarged_gep_sub} means that the eigenvalues of the original GEP in Eq.~\eqref{eq:g_eig} are also eigenvalues of Eq.~\eqref{eq:enlarged_gep} with the corresponding eigenvectors $\tilde{\bs{v}}=(\bs{v}^\top, \bs{0}^\top)^\top$.
Additionally, $\lambda=0$ is also an eigenvalue of Eq.~\eqref{eq:enlarged_gep} where the corresponding eigenvector is $\tilde{\bs{v}}=(\bs{0}^\top, \bs{v}'^\top)^\top$.
When we are interested in the minimum eigenvalue $\lambda_\mathrm{min} < 0$ (the maximum eigenvalue $\lambda_\mathrm{max} > 0$) of the original GEP in Eq.~\eqref{eq:g_eig}, it is still the minimum (maximum) eigenvalue of the GEP in Eq.~\eqref{eq:enlarged_gep}.
Therefore, we can minimize (maximize) the generalized Rayleigh quotient $R(\tilde{\bs{w}}; \tilde{A}, \tilde{B})$ where $\tilde{\bs{w}} \in \mathbb{C}^{2^n}$ can be represented by a state vector of $n$-qubit system.

\paragraph{When $\lambda_\mathrm{min} > 0$ or $\lambda_\mathrm{max} < 0$ is the eigenvalue of interest}
We introduce the enlarged matrices $\tilde{A} \in \mathbb{C}^{2^n \times 2^n}$ and $\tilde{B} \in \mathbb{C}^{2^n \times 2^n}$ defined as
\begin{align}
    &\tilde{A} := \begin{bmatrix}
    A & O_{N \times (2^n - N) } \\
    O_{N \times (2^n - N) }^\dagger & I_{(2^n - N) \times (2^n - N) }
    \end{bmatrix} \\
    &\tilde{B} := \begin{bmatrix}
    B & O_{N \times (2^n - N) } \\
    O_{N \times (2^n - N) }^\dagger & \epsilon I_{(2^n - N) \times (2^n - N) }
    \end{bmatrix},
\end{align}
where $\epsilon$ is a constant.
Considering the GEP in Eq.~\eqref{eq:enlarged_gep}, we obtain
\begin{align} \label{eq:enlarged_gep_sub2}
\begin{cases}
    & A \bs{v} = \lambda B \bs{v} \\
    & \bs{v}' = \lambda \epsilon \bs{v}', 
\end{cases}
\end{align}
where $\tilde{\bs{v}} = (\bs{v}^\top, \bs{v}'^\top)^\top$.
Eq.~\eqref{eq:enlarged_gep_sub2} means that the eigenvalues of the original GEP in Eq.~\eqref{eq:g_eig} are also eigenvalues of Eq.~\eqref{eq:enlarged_gep_sub2} with the corresponding eigenvectors $\tilde{\bs{v}}=(\bs{v}^\top, \bs{0}^\top)^\top$.
Additionally, $\lambda=1 / \epsilon$ is also an eigenvalue of Eq.~\eqref{eq:enlarged_gep} where the corresponding eigenvector is $\tilde{\bs{v}}=(\bs{0}^\top, \bs{v}'^\top)^\top$.
When we are interested in the minimum eigenvalue $\lambda_\mathrm{min} > 0$, a sufficiently small constant $\epsilon > 0$ keeps $\lambda_\mathrm{min}$ to be the minimum eigenvalue of GEP in Eq.~\eqref{eq:enlarged_gep}.
In contrast, when we are interested in the maximum eigenvalue $\lambda_\mathrm{max} < 0$, a negative constant $\epsilon < 0$ with a sufficiently small absolute value keeps $\lambda_\mathrm{max}$ to be the maximum eigenvalue of GEP in Eq.~\eqref{eq:enlarged_gep}.
Therefore, we can minimize (maximize) the generalized Rayleigh quotient $R(\tilde{\bs{w}}; \tilde{A}, \tilde{B})$ where $\tilde{\bs{w}} \in \mathbb{C}^{2^n}$ can be represented by a state vector of $n$-qubit system.

\section{Asymptotic behavior of eigenvalues under sampling errors} \label{sec:sampling_error}
The proposed method solves a small size GEP in Eq.~\eqref{eq:eigenprob} to update parameters in a PQC.
Since the matrices $S(\rho', A')$ and $S(\rho', B')$ are constructed by expectation values of $A$ and $B$ calculated using several parameter sets, they will include the sampling errors, i.e. shot noises.
Here, we analyze the asymptotic behavior of eigenvalues of Eq.~\eqref{eq:eigenprob}.

Let $S_A$ and $S_B$ denote $S(\rho', A')$ and $S(\rho', B')$, respectively for simple notations.
Let $n_\mathrm{s}$ be the number of shots per individual quantum circuit for evaluating expectations of $A$ and $B$.
Since $S_A$ and $S_B$ are respectively constructed by the linear combination of expectation values of $A$ and $B$ calculated by several parameter sets, i.e. parameter configuration, the perturbations can be represented as
\begin{align}
    S_A &= S_A^{(0)} + \epsilon S_A^{(1)}, \label{eq:S_A_asymp} \\
    S_B &= S_B^{(0)} + \epsilon S_B^{(1)}, \label{eq:S_B_asymp}
\end{align}
where $\epsilon$ is $\mathcal{O}\left(1/\sqrt{n_\mathrm{s}}\right)$, the superscript $\cdot^{(0)}$ represents a quantity without any perturbation and $\cdot^{(1)}$ represents the first-order perturbation.
Assuming that the expectation values of observable can be estimated unbiasedly i.e. $\mathbb{E}[S_A] = S_A^{(0)}$ and $\mathbb{E}[S_B] = S_B^{(0)}$, we can suppose $\mathbb{E}[S_A^{(1)}] = \mathbb{E}[S_B^{(1)}] = O$ hold where $O$ is a zero matrix.
We now consider the asymptotic expansions of $\lambda_i$ and $\bs{p}_i$ are respectively given as
\begin{align}
    \lambda_i &= \lambda_i^{(0)} + \epsilon \lambda_i^{(1)} + \epsilon^2 \lambda_i^{(2)} + o(\epsilon^2), \label{eq:lam_asymp} \\
    \bs{p}_i &= \bs{p}_i^{(0)} + \epsilon \bs{p}_i^{(1)} + \epsilon^2 \bs{p}_i^{(2)} + o(\epsilon^2) \label{eq:q_asymp},
\end{align}
where the superscript $\cdot^{(2)}$ represents the second-order perturbation. 
Substituting Eqs.~\eqref{eq:S_A_asymp}--\eqref{eq:q_asymp} into Eq.~\eqref{eq:eigenprob}, we obtain the following equations for each order of $\epsilon$.

\paragraph{$\mathcal{O}(1)$ terms} These terms correspond to the GEP without any sampling errors and satisfy the equation given as
\begin{align}
    S_A^{(0)} \bs{p}_i^{(0)} = \lambda_i^{(0)} S_B^{(0)} \bs{p}_i^{(0)}. \label{eq:asymp_0}
\end{align}

\paragraph{$\mathcal{O}(\epsilon)$ terms} We obtain the equation given as
\begin{align}
    &S_A^{(0)} \bs{p}_i^{(1)} + S_A^{(1)} \bs{p}_i^{(0)} \nonumber \\
    &= \lambda_i^{(0)} S_B^{(0)} \bs{p}_i^{(1)} + \lambda_i^{(0)} S_B^{(1)} \bs{p}_i^{(0)} + \lambda_i^{(1)} S_B^{(0)} \bs{p}_i^{(0)}. \label{eq:asymp_1}
\end{align}
Multiplying the above equation on the left by $\bs{p}_i^{{(0)}\top}$ and using Eq.~\eqref{eq:asymp_0} and the normalized condition $\bs{p}_i^{{(0)}\top} S_B^{(0)} \bs{p}_i^{{(0)}} = 1$, we obtain
\begin{align}
    \lambda_i^{(1)} = \bs{p}_i^{{(0)}\top} \left( S_A^{(1)} - \lambda_i^{(0)} S_B^{(1)} \right) \bs{p}_i^{(0)}. \label{eq:lam_1}
\end{align}
Since $\mathbb{E}[S_A^{(1)}] = \mathbb{E}[S_B^{(1)}] = O$, the mean value of $\lambda_i^{(1)}$ is
\begin{align}
    \mathbb{E}[\lambda_i^{(1)}] = 0.
\end{align}

\paragraph{$\mathcal{O}(\epsilon^2)$ terms} We obtain the equation as follows:
\begin{align}
    &S_A^{(0)} \bs{p}_i^{(2)} + S_A^{(1)} \bs{p}_i^{(1)} \nonumber \\
    &= \lambda_i^{(0)} S_B^{(0)} \bs{p}_i^{(2)} + \lambda_i^{(0)} S_B^{(1)} \bs{p}_i^{(1)} \nonumber \\
    &\quad + \lambda_i^{(1)} S_B^{(0)} \bs{p}_i^{(1)} + \lambda_i^{(1)} S_B^{(1)} \bs{p}_i^{(0)} + \lambda_i^{(2)} S_B^{(0)} \bs{p}_i^{(0)}.
\end{align}
Multiplying the above equation on the left  by $\bs{p}_i^{{(0)}\top}$ and rearranging the equations using Eqs.~\eqref{eq:asymp_0} and \eqref{eq:asymp_1}, and the normalized condition $\bs{p}_i^{{(0)}\top} S_B^{(0)} \bs{p}_i^{{(0)}} = 1$, we obtain
\begin{align}
    \lambda_i^{(2)} &= -\lambda_i^{(1)}\bs{p}_i^{{(0)}\top} S_B^{(1)} \bs{p}_i^{{(0)}} \nonumber \\
    & \quad - \bs{p}_i^{{(1)}\top} \left( S_A^{(0)} - \lambda_i^{(0)}S_B^{(0)} \right) \bs{p}_i^{(1)}.
\end{align}
Substituting Eq.~\eqref{eq:lam_1}, we obtain the mean value of $\lambda_i^{(2)}$ as
\begin{align}
    \mathbb{E}[\lambda_i^{(2)}] &= \lambda_i^{(0)} \mathbb{E} \left[ \left( \bs{p}_i^{{(0)}\top}  S_B^{(1)} \bs{p}_i^{(0)} \right)^2 \right] \nonumber \\
    & \quad - \mathbb{E}\left[ \bs{p}_i^{{(1)}\top} \left( S_A^{(0)} - \lambda_i^{(0)}S_B^{(0)} \right) \bs{p}_i^{(1)} \right].
\end{align}
These two terms in the right-hand side are not zero in general, and thus these appear as a bias of eigenvalues in Eq.~\eqref{eq:eigenprob}, as follows:
\begin{align}
    \mathbb{E}\left[ \lambda_i \right] &= \lambda_i^{(0)} + \epsilon^2 \lambda_i^{(0)} \mathbb{E} \left[ \left( \bs{p}_i^{{(0)}\top}  S_B^{(1)} \bs{p}_i^{(0)} \right)^2 \right] \nonumber \\
    & \quad - \epsilon^2 \mathbb{E}\left[ \bs{p}_i^{{(1)}\top} \left( S_A^{(0)} - \lambda_i^{(0)}S_B^{(0)} \right) \bs{p}_i^{(1)} \right].
\end{align}
Therefore, the minimum (maximum) eigenvalue of Eq.~\eqref{eq:eigenprob} obtained with sampling errors is not an unbiased estimator of the optimal objective function value in Eq.~\eqref{eq:obj_q} achievable by updating one single-qubit gate of interest.
However, this bias vanishes asymptotically no slower than or equal to $\epsilon^2$, i.e. $\mathcal{O}(1/n_\mathrm{s})$.

Next, we estimate the actual objective function value after update of parameters using the perturbed eigenvector.
Here, we consider the minimization of the objective function, but the maximization is straightforward.
Substituting the perturbed eigenvector $\bs{p}_1$ into the objective function in Eq.~\eqref{eq:obj_q}, we obtain
\begin{widetext}
\begin{align} \label{eq:perturbed_cost_appendix}
    \mathcal{F}(\bs{p}_1) &= \dfrac{\bs{p}_1^\top S_A^{(0)} \bs{p}_1}{\bs{p}_1^\top S_B^{(0)} \bs{p}_1} \nonumber \\
    &\approx \dfrac{\bs{p}_1^{(0)\top} S_A^{(0)} \bs{p}_1^{(0)} + 2\epsilon \bs{p}_1^{(0)\top} S_A^{(0)} \bs{p}_1^{(1)} + \epsilon^2 (2\bs{p}_1^{(0)\top} S_A^{(0)} \bs{p}_1^{(2)} + \bs{p}_1^{(1)\top} S_A^{(0)} \bs{p}_1^{(1)}) }{\bs{p}_1^{(0)\top} S_B^{(0)} \bs{p}_1^{(0)} + 2\epsilon \bs{p}_1^{(0)\top} S_B^{(0)} \bs{p}_1^{(1)} + \epsilon^2 (2\bs{p}_1^{(0)\top} S_B^{(0)} \bs{p}_1^{(2)} + \bs{p}_1^{(1)\top} S_B^{(0)} \bs{p}_1^{(1)}) } \nonumber \\
    &\approx \lambda_1^{(0)} + \epsilon^2 \bs{p}_1^{(1)\top} \left( S_A^{(0)} - \lambda_1^{(0)} S_B^{(0)}\right) \bs{p}_1^{(1)}.
\end{align}
\end{widetext}
From the second to the third lines, we used the Taylor expansion with respect to $\epsilon$ up to the second order.
Let $\bs{p}^{(1)}$ be expanded using the $\{ \bs{p}_i^{(0)} \}$ as basis, as follows:
\begin{align}
    \bs{p}_1^{(1)} = \sum_j c_j \bs{p}_j^{(0)},
\end{align}
where $c_j$ is a coefficient.
Substituting this into Eq.~\eqref{eq:perturbed_cost_appendix} and considering the normalization condition $\bs{p}_i^{(0)} S_B^{(0)} \bs{p}_j^{(0)} = \delta_{ij}$, we obtain
\begin{align}
    \mathcal{F}(\bs{p}_1) &\approx \lambda_1^{(0)} + \epsilon^2 \bs{p}_1^{(1)\top} \left( S_A^{(0)} - \lambda_1^{(0)} S_B^{(0)}\right) \bs{p}_1^{(1)} \nonumber \\
    &= \lambda_1^{(0)} + \epsilon^2 \sum_{j} \| c_j \|^2 \left( \lambda_j^{(0)} - \lambda_1^{(0)}\right) \nonumber \\
    & \leq \lambda_1^{(0)} + \epsilon^2 \left( \lambda_\mathrm{max}^{(0)} - \lambda_1^{(0)} \right) \sum_{j} \| c_j \|^2 \nonumber \\
    &= \lambda_1^{(0)} + \epsilon^2 \left( \lambda_\mathrm{max}^{(0)} - \lambda_1^{(0)} \right) \bs{p}_1^{(1)\top} S_B^{(0)} \bs{p}_1^{(1)},
\end{align}
where $\lambda_\mathrm{max}$ is the maximum eigenvalue.
Therefore, the mean value of the objective function after parameter update is given as
\begin{align}
    &\mathbb{E}\left[ \mathcal{F}(\bs{p}_1) \right] \nonumber \\
    &\approx \lambda_1^{(0)} + \epsilon^2 \mathbb{E} \left[ \bs{p}_1^{(1)\top} \left( S_A^{(0)} - \lambda_1^{(0)} S_B^{(0)}\right) \bs{p}_1^{(1)} \right] \nonumber \\
    & \leq \lambda_1^{(0)} + \epsilon^2 \left( \lambda_\mathrm{max}^{(0)} - \lambda_1^{(0)} \right) \mathbb{E} \left[ \bs{p}_1^{(1)\top} S_B^{(0)} \bs{p}_1^{(1)} \right],
\end{align}
The second term is the gap between the ideal optimal value $\mathcal{F}(\bs{p}_1^{(0)})=\lambda_1^{(0)}$ and the mean value of the actual value $\mathcal{F}(\bs{p}_1)$, and vanishes asymptotically no slower than or equal to $\epsilon^2$.

\section{Overview of XBM} \label{sec:xbm}
To evaluate the expectation value of an Hermitian $H$, i.e., $\braket{H}=\tr{H\rho}$, this study employs the extended Bell measurement (XBM)~\cite{kondo2022computationally}.
Here, we briefly summarize the XBM.
Let $A\in\mathbb{C}^{2^n\times 2^n}$ be the arbitrary matrix and $A_{ij}$ the $(i,j)$-component of $A$.
In the XBM, the expectation value of $A$ with the quantum state vector $\ket{\psi}\in\mathbb{C}^{2^n}$ is written as
\begin{widetext}
\begin{equation}
\braket{\psi|A|\psi} = \sum^{2^n-1}_{i=0}A_{ii}|\langle i|\psi\rangle|^2
 + \sum_{l\in\{i\oplus j|A_{ij}\neq0\}\setminus\{0\}}\sum^{2^n-1}_{i=0}\left(a_\mathrm{Re}(A, l, i)|\langle i|M^{(l)}_\mathrm{Re}|\psi\rangle|^2+a_\mathrm{Im}(A, l, i)|\langle i|M^{(l)}_\mathrm{Im}|\psi\rangle|^2\right),
\end{equation}
\end{widetext}
where $a_\mathrm{Re}(A, l, i)$ and $a_\mathrm{Im}(A, l, i)$ are the function of $(A,l,i)$, and $M^{(l)}_{\rm Re}$ and $M^{(l)}_{\rm Im}$ are the measurement operators defined as 
\begin{equation}
\begin{cases}
\left(M^{(i\oplus j)}_\mathrm{Re}\right)^\dagger\Ket{i} = \dfrac{\Ket{i} + \Ket{j}}{\sqrt{2}} \\
\left(M^{(i\oplus j)}_\mathrm{Im}\right)^\dagger\Ket{i} = \dfrac{\Ket{i} + \iota \Ket{j}}{\sqrt{2}},
\end{cases}
\end{equation}
respectively.
Both measurement operators $M^{(l)}_{\rm Re}$ and $M^{(l)}_{\rm Im}$ can be expressed using at most one Hadamard gate, one phase gate, and $n-1$ CNOT gates.
It is known that when the bandwidth of $A$ is $k=\mathcal{O}(n^c)$ with a constant $c$, the number of groups for simultaneous measurement is $\mathcal{O}(nk)$~\cite{kondo2022computationally}.

\section{Matrix assembly in the finite element method~\cite{allaire2007numerical, hughes2012finite}} \label{sec:fem}
Here, we briefly explain the matrix assembly in the finite element method~\cite{allaire2007numerical, hughes2012finite} to construct the stiffness matrix $K$ for both the Poisson equation and the linear elastic solid, and the mass matrix $M$ for the linear elastic solid.
The main procedure is (1) deriving the weak form of the PDE to be solved, and (2) discretizing the weak form by introducing the shape function.

\subsection{Assembly for the Poisson equation} \label{sec:fem_poisson}
First, let us derive the weak form of the Poisson equation in Eq.~\eqref{eq:poisson}.
Let $\tilde{u} \in H^1_0(\Omega)$ be an arbitrary function in the space $H^1_0(\Omega)$ defined as
\begin{align}
    H^1_0(\Omega) := \{ \tilde{u} \in H^1(\Omega) ~|~ \tilde{u}(\bs{x})=0 ~\text{on}~ \partial \Omega  \},
\end{align}
where $H^1(\Omega)$ is the Sobolev space.
Multiplying Eq.~\eqref{eq:poisson} by the test function $\tilde{u} (\bs{x})$, and integrating both sides over the domain $\Omega$, we obtain
\begin{align} \label{eq:int_poisson}
    -\int_\Omega \tilde{u}(\bs{x}) \nabla^2 u(\bs{x}) \diff \Omega = \int_\Omega \tilde{u}(\bs{x}) f(\bs{x}) \diff \Omega.
\end{align}
Applying the integration by parts and the Gauss's theorem to Eq.~\eqref{eq:int_poisson}, we derive the weak form of the Poisson equation as
\begin{align} \label{eq:weak_poisson}
    &\text{find} \quad u \in H^1_0(\Omega) \nm \\
    &\text{s.t.} \int_\Omega \nabla \tilde{u}(\bs{x}) \cdot \nabla u(\bs{x}) \diff \Omega = \int_\Omega \tilde{u}(\bs{x}) f(\bs{x}) \diff \Omega ~ \forall \tilde{u} \in H^1_0(\Omega).
\end{align}

Next, we approximate the solution of the weak form by introducing the set of shape functions $ \{ \phi_j \}_{j=1}^N$ as $u(\bs{x}) \approx \sum_j \phi_j (\bs{x})u_j$ and we set $\bs{U} := [ u_1, \ldots u_N ]^\top$.
Substituting $u(\bs{x}) \approx \sum_j \phi_j (\bs{x})u_j$ into the weak form in Eq.~\eqref{eq:weak_poisson} and taking $\tilde{u} = \phi_i$, we obtain
\begin{align}
\sum_{j=1}^N u_j \int_\Omega \nabla \phi_i(\bs{x}) \cdot \nabla \phi_j(\bs{x}) \diff \Omega = \int_\Omega \phi_i (\bs{x}) f(\bs{x}) \diff \Omega.
\end{align}
Considering this equation in $i = 1, \ldots, N$ yields a linear system of the form
\begin{align}
    K \bs{U} = \bs{F},
\end{align}
where
\begin{align}
    (K)_{i j} &:= \int_\Omega \nabla \phi_i(\bs{x}) \cdot \nabla \phi_j(\bs{x}) \diff \Omega \label{eq:poisson_K} \\
    (\bs{F})_{i} &:= \int_\Omega \phi_i (\bs{x}) f(\bs{x}) \diff \Omega. \label{eq:poisson_f}
\end{align}

Focusing on the case of one dimension, we consider the domain $\Omega=(0, 1)$ and introduce the uniform mesh with nodes $0 < h=x_1 < x_2 < \ldots < x_N = 1-h < 1$ where $x_{j+1} - x_j = h$ for all $j=1, \ldots N-1$.
The first-order element we used employs the shape function $\phi_j$ defined as
\begin{align} \label{eq:poisson_shape}
    \phi_j (x) = 
    \begin{cases}
        1 - \dfrac{|x - x_j|}{h} & \text{if } |x - x_j| \leq h \\
        0 & \text{if } |x - x_j| > h.
    \end{cases}
\end{align}
Substituting Eq.~\eqref{eq:poisson_shape} into Eq.~\eqref{eq:poisson_K} followed by a simple calculation shows that the striffness matrix $K$ is tridiagonal matrix written as
\begin{align}
    (K)_{i j} = \dfrac{1}{h} \begin{bmatrix}
    2  & -1 & 0  & \cdots & 0 \\
    -1 & 2  & -1 & \cdots &  \\
      & \ddots & \ddots  & \ddots & \\
      &  & -1  & 2 & -1 \\
    0  & \cdots & 0 & -1 & 2
    \end{bmatrix}.
\end{align}
To obtain the vector $\bs{F}$, we can use a certain quadrature formula to calculate the integration of the right hand side of Eq.~\eqref{eq:poisson_f}.
In the present study, we take $(\bs{F})_j = f(x_j)$, for simplicity.

\subsection{Assembly for the eigenvalue problem of a linear elastic solid} \label{sec:fem_solid}
% \todo[inline]{HCW: similar to Which part?}
We proceed in a similar manner to the discretization of the Poisson equation in Sec.~\ref{sec:fem_poisson} to obtain the stiffness matrix $K$ and the mass matrix $M$ for the eigenfrequency analysis of a linear elastic solid by the finite element method~\cite{allaire2007numerical, hughes2012finite}.
By a similar procedure to the case of the Poisson equation, we obtain the weak form of Eq.~\eqref{eq:solid} as
\begin{align} \label{eq:weak_solid}
    &\text{find} \quad \bs{u} \in \mathcal{U} \nm \\
    &\text{s.t.} \int_\Omega \tilde{\bs{\varepsilon}} : \bs{C} : \bs{\varepsilon} \diff \Omega = \int_\Omega \varrho \omega^2 \tilde{\bs{u}} \cdot \bs{u} \diff \Omega \quad \forall \tilde{\bs{u}} \in \mathcal{U},
\end{align}
where $\varrho$ is the density of the solid,
\begin{align}
    \bs{\varepsilon}(\bs{x}) &= \dfrac{1}{2} \left( \nabla \bs{u} + \nabla \bs{u}^\top \right) \\
    \tilde{\bs{\varepsilon}}(\bs{x}) &= \dfrac{1}{2} \left( \nabla \tilde{\bs{u}} + \nabla \tilde{\bs{u}}^\top \right),
\end{align}
and $\mathcal{U}$ is the Sobolev space defined as
\begin{align}
    \mathcal{U} := \{ \bs{u} \in H^1(\Omega)^m ~|~ \bs{u}(\bs{x}) = 0  ~\text{on}~ \Gamma_\text{D} \}. 
\end{align}
$\Gamma_\text{D} \subset \partial \Omega$ is the boundary on which the displacement $\bs{u}$ is fixed.

We now consider to approximate the solution of the weak from by using the shape functions, i.e., $\bs{u}(\bs{x}) \approx \sum_j \phi_j (\bs{x}) \bs{u}_j$ and we set $\bs{U} := [\bs{u}_1^\top, \ldots, \bs{u}_N^\top]^\top$ which has the dimension of $Nm$ since the solution $\bs{u}$ is a vector-valued function that takes values in $\mathbb{R}^m$.
That is, the $(m(j-1)+l)$-th component of $\bs{U}$ represents the displacement at $j$-th node along the $l$-th axis, i.e., $(\bs{u}_j)_l$.
Substituting $\bs{u} \approx \sum_j \phi_j (\bs{x}) \bs{u}_j$ and Eq.~\eqref{eq:elastic_tensor} into Eq.~\eqref{eq:weak_solid} and taking $(\tilde{\bs{u}})_{k'} = \delta_{k k'} \phi_i$, we obtain
\begin{widetext}
\begin{align} \label{eq:weak_solid_approx}
& \sum_{j, l} (\bs{u}_j)_l \int_\Omega \left( \dfrac{E \nu}{(1 + \nu)(1 - 2\nu)} \phi_{i, k}(\bs{x}) \phi_{j, l}(\bs{x}) + \dfrac{E}{2(1 + \nu)} \delta_{kl} \nabla \phi_i (\bs{x}) \cdot \nabla \phi_j (\bs{x}) + \dfrac{E}{2(1 + \nu)} \phi_{j,k} \phi_{i, l} \right) \diff \Omega \nm \\
&= \lambda \sum_{j, l} (\bs{u}_j)_l \int_\Omega \varrho \delta_{kl} \phi_i(\bs{x}) \phi_j (\bs{x}) \diff \Omega,
\end{align}
\end{widetext}
where $\lambda = \omega^2$ and $\phi_{\alpha, \beta}$ represents $\partial \phi_\alpha / \partial x_\beta$.
Considering Eq.~\eqref{eq:weak_solid_approx} in $i=1, \ldots, N$ and $k=1 \ldots, m$ yields a generalized eigenvalue problem of the form
\begin{align}
    K \bs{U} = \lambda M \bs{U},
\end{align}
where
\begin{widetext}
\begin{align}
    (K)_{i' j'} &:= \int_\Omega \left( \dfrac{E \nu}{(1 + \nu)(1 - 2\nu)} \phi_{i, k}(\bs{x}) \phi_{j, l}(\bs{x}) + \dfrac{E}{2(1 + \nu)} \delta_{kl} \nabla \phi_i (\bs{x}) \cdot \nabla \phi_j (\bs{x}) + \dfrac{E}{2(1 + \nu)} \phi_{j,k} \phi_{i, l} \right) \diff \Omega \\
    (M)_{i' j'} &:= \int_\Omega \varrho \delta_{kl} \phi_i(\bs{x}) \phi_j (\bs{x}) \diff \Omega,
\end{align}
\end{widetext}
using the notation $i' := m(i-1)+k$ and $j' := m(j-1)+l$.

Focusing on the case of two dimensions, we consider the domain $\Omega=(0, W) \times (0, H)$ and introduce the grid mesh with nodes $(x_{j_x}, y_{j_y})$ defined by the equally-spaced points along each axis, $0 = x_1 < x_2 < \ldots < x_{N_x} = W$ and $0 = y_1 < y_2 < \ldots < y_{N_y} = H$ where $x_{j_x+1} - x_{j_x} = h_x$ for all ${j_x}=1, \ldots N_x-1$ and $y_{j_y+1} - y_{j_y} = h_y$ for all $j_y=1, \ldots N_y-1$, respectively.
We used the first-order element which employs the shape function $\phi_j$ defined as
\begin{widetext}
\begin{align} \label{eq:solid_shape}
    \phi_j (x, y) = 
    \begin{cases}
        \left( 1 - \dfrac{|x - x_{j_x}|}{h_x} \right) \left( 1 - \dfrac{|y - y_{j_y}|}{h_y} \right) & \text{if } |x - x_{j_x}| \leq h_x \text{ and } |y - y_{j_y}| \leq h_y \\
        0 & \text{otherwise},
    \end{cases}
\end{align}
\end{widetext}
associating $j$ with $j_x$ and $j_y$ as $j = N_y(j_x - 1) + j_y$.
We used the Gauss quadrature rule for calculating the integration.

\section{Optimal parameter configuration for generalized eigenvalue problem} \label{sec:param_config}
In the following, we consider the minimization of the objective function, the minimum eigenvalue of Eq.~\eqref{eq:eigenprob}. 
The analysis for the maximization problems is straightforward.
Let $P := \{ \tilde{\bs{q}}_i \}_{i=1}^r$ be the parameter configuration, the set of $r$ parameters, and $\bs{h}_A := (h_{A1}, \cdots h_{Ar})^\top$ and $\bs{h}_B := (h_{B1}, \cdots h_{Br})^\top$ be the expectation values of $A$ and $B$ through the parameter in the configuration.
Typically, $r = 10$ for FQS, $r=6$ for Fraxis, and $r=3$ for NFT.
Let $S_A$ and $S_B$ denote $S(\rho', A')$ and $S(\rho', B')$, respectively for simple notations.
Since $S_A$ and $S_B$ are respectively constructed by applying the linear transformation depending on the parameter configuration to $\bs{h}_A$ and $\bs{h}_B$~\cite{endo2023optimal}, $S_A$ and $S_B$ are represented as
\begin{align}
    S_A &= T(\bs{h}_A | P), \\
    S_B &= T(\bs{h}_B | P),
\end{align}
where $T$ is the linear transformation depending of the parameter configuration $P$.
Now, we assume that the expectation values are perturbed due to sampling errors around $\bs{h}_A$ and $\bs{h}_B$ by uncorrelated Gaussian noises $\bs{\delta}_A \sim \mathcal{N}(\bs{0}, \sigma_A^2 / n_\mathrm{s} I )$ and $\bs{\delta}_B \sim \mathcal{N}(\bs{0}, \sigma_B^2 / n_\mathrm{s} I )$, respectively, where $\sigma_A^2$ and $\sigma_B^2$ are the variances of estimating each entry of $\bs{h}_A$ and $\bs{h}_B$, respectively, and $I$ is the $r \times r$ identity matrix.
Note that the variances of each entry of $\bs{h}_A$ and $\bs{h}_B$ are not in general the same since they depend on parameters of PQC.
However, we herein assume the variances are the same to discuss the parameter configuration independent from particular parameter values of PQC.
The assumption of the uncorrelated Gaussian noises correspond to the assumption that the estimation of the expectation values of $A$ and $B$ is unbiased, i.e. $\mathbb{E}[\bs{\delta}_{A}] = \mathbb{E}[\bs{\delta}_{B}] = \bs{0}$ and that the measurement outcomes of quantum circuits are independent, i.e. $\mathbb{E}[\delta_{Ai} \delta_{Aj}] = \mathbb{E}[\delta_{Bi} \delta_{Bj}] = \mathbb{E}[\delta_{Ai}\delta_{Bj}] = 0$ for $i, j (\neq i) \in \{1, \cdots, r \}$.
Under sampling errors, $S_A$ and $S_B$ are now represented as
\begin{align}
    S_A &= T(\bs{h}_A | P) + T(\bs{\delta}_A | P), \\
    S_B &= T(\bs{h}_B | P) + T(\bs{\delta}_B | P).
\end{align}
That is, $T(\bs{h}_A | P)$ and $T(\bs{h}_B | P)$ correspond to $S(\rho', A')^{(0)}$ and $S(\rho', B')^{(0)}$, respectively, and $T(\bs{\delta}_A | P)$ and $T(\bs{\delta}_B | P)$ correspond to $\epsilon S(\rho', A')^{(1)}$ and $\epsilon S(\rho', B')^{(1)}$, respectively.
To derive the optimal parameter configuration, we focus on the variance of the minimum eigenvalue in Eq.~\eqref{eq:eigenprob}.
The smaller the variance of the eigenvalue is, the more accurately the objective function value is estimated under the sampling errors.
Based on the first-order perturbation, the variance of the minimum eigenvalue is given as
\begin{align}
    \mathrm{Var}[\lambda_1] = \mathrm{Var}\left[ \bs{p}_i^{{(0)}\top} T ( \bs{\delta}_A - \lambda_1^{(0)} \bs{\delta}_B | P ) \bs{p}_i^{(0)} \right]. 
\end{align}
To minimize $\mathrm{Var}[\lambda_1]$ with respect to $P$, we can use the same analysis as that in Ref~\cite{endo2023optimal}, assuming that $\bs{\delta}_A - \lambda_1^{(0)} \bs{\delta}_B$ follows the uncorrelated Gaussian distribution, which is actually satisfied from the assumption of $\bs{\delta}_A$ and $\bs{\delta}_B$.
Thus, we can evaluate Eq.~\eqref{eq:S_components} using the optimal parameter configuration, which is the set of parameters aligned in a symmetric manner, to mitigate the sampling errors.

\section*{Acknowledgement}
This work was supported by the MEXT Quantum Leap Flagship Program Grant Number JPMXS0118067285 and JPMXS0120319794. 
H.C.W. was supported by JSPS Grant Numbers 20K03885. 
N.Y. was supported by JSPS KAKENHI Grant Number 20H05966.

\bibliographystyle{unsrt}
\bibliography{ref}

\end{document}